\newcommand{\nop}[1]{}
\newtheorem{definition}{Definition}
\newtheorem{example}{Example}
\newtheorem{theorem}{Theorem}
\newtheorem{corollary}{Corollary}
\newtheorem{proposition}{Proposition}
\newtheorem{remark}{Remark}
\begin{document}

\title{Stochastic Dependence in Wireless Channel Capacity: A Hidden Resource}

\author{
Fengyou Sun and Yuming Jiang\\
%Department of Information Security and Communication Technology\\ 
NTNU, Norwegian University of Science and Technology, Trondheim, Norway\\
sunfengyou@gmail.com, ymjiang@ieee.org
}

\maketitle

\begin{abstract}
This paper dedicates to exploring and exploiting the hidden resource in wireless channel.
We discover that the stochastic dependence in wireless channel capacity is a hidden resource, specifically, if the wireless channel capacity bears negative dependence, the wireless channel attains a better performance with a smaller capacity. We find that the dependence in wireless channel is determined by both uncontrollable and controllable parameters in the wireless system, by inducing negative dependence through the controllable parameters, we achieve dependence control.
We model the wireless channel capacity as a Markov additive process, i.e., an additive process defined on a Markov process, and we use copula to represent the dependence structure of the underlying Markov process. Based on a priori information of the temporal dependence of the uncontrollable parameters and the spatial dependence between the uncontrollable and controllable parameters, we construct a sequence of temporal copulas of the Markov process, given the initial distributions, we obtain a sequence of transition matrices of the controllable parameters satisfying the expected dependence properties of the wireless channel capacity.
The goal of this paper is to show the improvement of wireless channel performance from transforming the dependence structures of the capacity.
\end{abstract}

\begin{IEEEkeywords}
Wireless channel capacity; Dependence model; Dependence control; Markov process; Copula.
\end{IEEEkeywords}

\IEEEpeerreviewmaketitle

\section{Introduction}

Wireless communication has been around for over a hundred years, starting in 1896 with Marconi's successful demonstration of wireless telegraphy and transmission of the first wireless signals across the Atlantic in 1901 \cite{niehenke2014wireless}.
For cellular systems, the first generation is deployed in around 1980s, i.e., 1G, then 2G in 1990s, 3G in 2000s, and 4G in 2010s \cite{niehenke2014wireless}.
It has become a pattern that a new generation of wireless system is deployed every a decade and the theme of each generation is to increase the capacity and spectral efficiency of wireless channels. 
The trend is driven by the explosion of wireless traffic that is a rough reflection of people's demand on wireless communication, and the paradox of supply and demand \cite{hecht2016bandwidth} is kept relieving generation by generation through exploiting the physical resources, i.e., power, diversity, and degree of freedom \cite{tse2005fundamentals}.

Considering trillions of devices to be connected to the wireless network, high capacity demand, and stringent latency requirement in the coming 5G \cite{andrews2014will}, it's imperative to rethink the wireless channel resources.
We present a perspective on the challenge by asking and answering a question in this paper.
\begin{itemize}
\item
What's the hidden resource and how to use it?

We discover that the stochastic dependence in wireless channel capacity is the hidden resource and we find a way to achieve dependence control.
We classify the dependence into three categories, i.e., positive dependence, independence, and negative dependence, and we identify the dependence as a resource, because when the wireless channel capacity bears negative dependence relative to positive dependence, the wireless channel attains a better performance with a smaller average capacity with respect to independence.
Being exploitable is a further requirement of a benign resource, specifically, the dependence in wireless channel is induced both by the uncontrollable parameters, e.g., fading, and by the controllable parameters, e.g., power, 
we model the dependence caused by these random parameters with multivariate copula, and we propose to use the controllable dimensions to transform the dependence in whole capacity process.
While the multivariate dependence nature of wireless channel capacity is complex, the diversity of dependence structures in different dimensions provides an opportunity to achieve dependence control and improve channel performance.
\end{itemize}

The copula property of Markov process is investigated in \cite{darsow1992copulas} and extended to high order case in \cite{ibragimov2009copula} and multivariate case in \cite{overbeck2015multivariate}. No-Granger causality is a concept in econometrics and its relation with Markov process is investigated in \cite{cherubini2011copula}.
We model the random parameters in wireless channel capacity as a multivariate Markov process, the Markov family copula in \cite{darsow1992copulas,overbeck2015multivariate} are used not only as a mechanism for dependence modelling, i.e., the copula is an expression of dependence structure, but also as a tool for dependence controlling, i.e., the copula function provides a solution to the controllable parameter configuration.
We apply the no-Granger causality to model the relationship between the controllable and uncontrollable parameters, and the sufficient and necessary condition for Markov process is extended from the bivariate case in \cite{cherubini2011copula} to the multivariate case in this paper.

The remainder of this paper is structured as follows.
Sec. \ref{model} dedicates to modelling dependence in wireless channel capacity.
First, basic capacity concepts are introduced, including instantaneous capacity, cumulative capacity, and transient capacity; 
second, the capacity is modeled as a Markov additive process and the Markov property is expressed by copula; 
third, the distribution function of the Markov additive capacity is investigated and lower and upper bounds are derived.
Sec. \ref{control} proposes an approach to control dependence.
First, performance measures of the wireless channel are derived, namely delay, backlog, and delay-constrained capacity; second, the dependence is classified into three types, i.e., positive dependence, independence, and negative dependence, according to the performance measure, it's proved that the negative dependence is good to the channel performance, while the positive dependence does the opposite thing; last, the cause of dependence in capacity is distinguished as uncontrollable parameters and controllable parameters, it's elaborated how to use the controllable parameters to induce negative dependence to the capacity, and an example of using power as a controllable parameter is illustrated.
Finally, the paper is concluded in Sec. \ref{conclusion}.

\section{Dependence Model}\label{model}

\subsection{Channel Capacity}

Consider a flat fading channel with input $x(t)$, output $y(t)$, fading process $h(t)$, and additive white Gaussian noise process $n(t)\sim\mathcal{C}\mathcal{N}(0,N_0)$,
the complex baseband representation is expressed as \cite{goldsmith2005wireless,tse2005fundamentals}
\begin{equation}
y(t) = h(t)x(t) + n(t),
\end{equation}
conditional on a realization of $h(t)$, the mutual information 
is expressed as \cite{goldsmith2005wireless} 
\begin{IEEEeqnarray}{rCl}\label{eq-2}
I(X;Y|h(t)) = \sum\limits_{x\in\mathcal{X},y\in\mathcal{Y}} P(x,y|h_t)\log_2\frac{P(x,y|h_t)}{P(x|h_t)P(y|h_t)}, \IEEEeqnarraynumspace
\end{IEEEeqnarray}
where $\mathcal{X}$ and $\mathcal{Y}$ are respectively the input and output alphabets of the channel. 
For multiple input and multiple output channel, the generalized formula is available in \cite{telatar1999capacity,foschini1998limits}.

The maximum mutual information over input distribution at $t$, denoted as $C(t)$, is defined as { instantaneous capacity} \cite{costa2010multiple}:
\begin{equation}\label{eq-1}
C(t) = \max\limits_{P(x)}I(X;Y|h(t)).
\end{equation}
The sum of instantaneous capacity in discrete time $(s,t]$ or the integral of instantaneous capacity in continuous time $[s,t)$, denoted as $S(s,t)$, is defined as { cumulative capacity}:
\begin{eqnarray}
S(s,t) = \sum\limits_{i=s+1}^{t}{C(i)}\ 
\left(\text{or\ } S(s,t) = \int_{s}^{t}C(\tau)d\tau \right). 
\end{eqnarray}
Denote $S(t)\equiv S(0,t)$.
The time average of the cumulative capacity through $[0,t)$ is defined as { transient capacity}:
\begin{equation}
\overline{C}(t) = \frac{S(t)}{t}. 
\end{equation}

\begin{example}
For a single input single output channel, 
if the channel side information is only known at the receiver, the instantaneous capacity is expressed as \cite{tse2005fundamentals}
\begin{equation}\label{def-ic}
C(t) = W\log_{2}\left(1+\gamma|h(t)|^{2}\right),
\end{equation}
where $|h(t)|$ denotes the envelope of $h(t)$, $\gamma = {P}/{N_{0}W}$ denotes the average received SNR per complex degree of freedom, $P$ denotes the average transmission power per complex symbol, $N_{0}/2$ denotes the power spectral density of AWGN, and $W$ denotes the channel bandwidth. 
\end{example}

\subsection{Markov Dependence}

Let $(\Omega,\bm{\mathscr{F}},(\bm{\mathscr{F}}_t)_{t\in{N}},P)$ be a filtered probability space and $(\bm{X}_t)_{t\in{N}}$ be an adapted stochastic process. $\bm{X}$ is a Markov process if and only if
\begin{IEEEeqnarray}{rCl}
P\left( \bm{X}_t\le{x}|\bm{X}_{t-1}, \bm{X}_{t-2}, \ldots, \bm{X}_0 \right) = P\left( \bm{X}_t\le{x}|\bm{X}_{t-1} \right). \IEEEeqnarraynumspace
\end{IEEEeqnarray}
The Markov property is solely a dependence property that can be modeled exclusively in terms of copulas \cite{darsow1992copulas,overbeck2015multivariate}.
(Copula is introduced in Appendix \ref{copula_introduction}.)

The $n$-dimensional process $\mathbf{X}$ is a Markov process, if and only if, for all $t_1< t_2<\ldots < t_p$, the copula $C_{t_1,\ldots,t_p}$ of $(\mathbf{X}_{t_1},\ldots,\mathbf{X}_{t_p})$ satisfies \cite{overbeck2015multivariate}
\begin{IEEEeqnarray}{rCl}
C_{t_1,\ldots,t_p} = C_{t_1,t_2}\stackrel{C_{t_2}(.)}{\star}C_{t_2,t_3}\stackrel{C_{t_3}(.)}{\star}\ldots\stackrel{C_{t_{p-1}}(.)}{\star}C_{t_{p-1},t_p}. \IEEEeqnarraynumspace
\end{IEEEeqnarray}
Provided that the integral exists for all $\mathbf{x}$, $\mathbf{y}$, $\mathbf{z}$, 
the operator $\stackrel{C(.)}{\star}$ is defined by
\begin{equation}
(A\stackrel{C(\mathbf{z})}{\star}B)(\mathbf{x},\mathbf{y}) = \int_{0}^{\mathbf{z}} A_{,C}(\mathbf{x},\mathbf{r})\cdot B_{C,}(\mathbf{r},\mathbf{y})C(d\mathbf{r}),
\end{equation}
where $A$ is a $(k + n)$-dimensional copula, $B$ is a $(n + l)$-dimensional
copula, $C$ is a $n$-dimensional copula, and $A(\bm{x},d\bm{y}) =A_{,C}(\mathbf{x}, \mathbf{y})C(d\bm{y})$ and $B(d\bm{x},\bm{y})=B_{C,}(\mathbf{x}, \mathbf{y})C(d\bm{x})$ are respectively the derivative of the copula $A(\mathbf{x}, .)$ and $B(., \mathbf{y})$ with respect to the copula $C$.
$A_{,C}$ and $B_{C,}$ are well-defined.
Specifically, for $1$-dimensional Markov process, the copula is expressed by \cite{darsow1992copulas}
\begin{equation}
C_{t_1\ldots{t_n}} = C_{t_1 t_2}\star C_{t_2 t_3}\star\ldots\star C_{t_{n-1}t_n},
\end{equation}
where $C_{t_1\ldots{t_n}}$ is the copula of $\left(X_{t_1},\ldots,X_{t_n}\right)$, $C_{t_{k-1}t_k}$ is the copula of $\left( X_{t_{k-1}}, X_{t_k} \right)$, and $A\star{B}$ is defined as
\begin{multline}
A\star B \left( x_1,\ldots,x_{m+n-1} \right) = \\
\int\limits_{0}^{x_m} 
\frac{\partial A_{,m}(x_1,\ldots,x_{m-1},\xi)}{\partial{\xi}} \frac{\partial B_{1,}(\xi,x_{m+1},\ldots,x_{m+n-1})}{\partial\xi} d\xi,
\end{multline}
for $m$-dimensional copula $A$ and $n$-dimensional copula $B$.

\begin{proposition}
If the dependence in capacity is driven by a Markov process and the instantaneous capacity has a specific distribution with respect to a specific state transition, the additive capacity together with the underlying Markov process, i.e., cumulative capacity, form a Markov additive process, which is a bivariate process with strong Markov property and the increments are conditionally independent given a realization of the underlying Markov process.
\end{proposition}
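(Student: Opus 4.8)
The plan is to name the two components of the asserted Markov additive process and then to verify, in order, the bivariate Markov property of the pair, the conditional independence of the increments given the background path, and finally the strong Markov property. Write $(J_t)_{t\in N}$ for the underlying Markov process that drives the dependence and $S_t=\sum_{i=1}^{t}C(i)$ for the cumulative capacity. By hypothesis the instantaneous capacity $C(i)$ is, conditional on the transition $J_{i-1}\to J_i$, drawn from a fixed distribution $F_{J_{i-1},J_i}$ and is otherwise independent of the past. The object to study is the pair $(J_t,S_t)$.

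First I would establish the Markov property of the pair by showing, for every level $x$ and state $k$, that
\begin{equation}
P(J_{t+1}=k,\,S_{t+1}-S_t\le x \mid \bm{\mathscr{F}}_t) = P(J_{t+1}=k \mid J_t)\,F_{J_t,k}(x).
\end{equation}
The first factor reduces to a function of $J_t$ alone by the Markov property of $(J_t)$, which is precisely the copula factorization recalled above; the second factor depends on the history only through the transition $J_t\to k$ by hypothesis, and in particular is invariant to the current level $S_t$. Since the right-hand side depends on $\bm{\mathscr{F}}_t$ only through $J_t$, the pair $(J_t,S_t)$ is Markov, and the invariance of the increment law to $S_t$ is the additive, second-coordinate-homogeneous structure demanded of a Markov additive process.

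Next I would verify that, conditional on an entire realization of the background path $(J_0,J_1,\ldots)$, the increments $\{C(i)\}_i$ are mutually independent with $C(i)\sim F_{J_{i-1},J_i}$. I would do this by factorizing the conditional joint law of $(C(1),\ldots,C(n))$ given $(J_0,\ldots,J_n)$ — most cleanly through the conditional joint characteristic function — into the product $\prod_{i=1}^{n} F_{J_{i-1},J_i}$ of the one-step increment laws, which holds because each increment is produced by its own one-step transition together with an independent innovation. This conditional independence of increments over disjoint intervals is exactly the defining additive feature of a Markov additive process and, combined with the previous step, completes the structural description.

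Finally I would upgrade the ordinary Markov property to the strong Markov property. For a stopping time $\tau$ adapted to $(\bm{\mathscr{F}}_t)$ I would condition on the events $\{\tau=n\}$ and apply the tower property: on each such event the post-$\tau$ evolution of $\bigl(J_{\tau+\cdot},\,S_{\tau+\cdot}-S_\tau\bigr)$ is governed by the same one-step kernel as a copy of the process started afresh from $(J_\tau,0)$, because the increment depends on the transition rather than on absolute time, and summing over $n$ yields the strong Markov identity. The main obstacle I anticipate is the rigorous treatment of this reduction when the state space of $(J_t)$ is general rather than finite, since there the copula-based characterization and the regular conditional distributions $F_{j,k}$ must be invoked with enough care to guarantee measurability; relatedly, the strong Markov step presumes time-homogeneity of the kernel, so for the time-inhomogeneous transition sequences built later in the paper one would instead argue that the space-time process is strongly Markov.
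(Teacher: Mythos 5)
Your verification is correct, but it is worth knowing that the paper itself supplies no proof of this proposition at all: it is stated as an observation and immediately followed by the remark that a Markov additive process is, by definition, an additive process defined on a Markov process, with a citation to \c{C}inlar's work. In other words, the authors treat the proposition as a matter of matching the hypothesis to the definition, whereas you actually carry out that matching in detail. Your three steps are the standard ones and they all go through: the one-step computation
\begin{equation}
P\left(J_{t+1}=k,\,S_{t+1}-S_t\le x \mid \bm{\mathscr{F}}_t\right) = P\left(J_{t+1}=k \mid J_t\right)\,F_{J_t,k}(x)
\end{equation}
establishes both the bivariate Markov property and the translation invariance in the second coordinate; the product factorization of the conditional law of $(C(1),\ldots,C(n))$ given the background path is exactly the conditional-independence clause of the statement (your writing the factorization as $\prod_i F_{J_{i-1},J_i}$ is a harmless abuse of notation for the product of the one-step conditional distribution functions); and in discrete time the strong Markov property follows from the ordinary one by partitioning over $\{\tau=n\}$, as you say. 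Your closing caveats about general state spaces and time-inhomogeneity are sensible but not needed here, since the paper restricts to finite state space and, where it uses inhomogeneous transition sequences, only the ordinary Markov structure is invoked. The net effect is that your write-up is more rigorous than the paper's treatment; the paper's approach buys brevity by leaning on the cited definition, while yours makes explicit which part of the hypothesis delivers which clause of the conclusion.
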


In other words, a Markov additive process is a non-stationary additive process defined on a Markov process \cite{ccinlar1972markovi,ccinlar1972markovii}, if the Markov process has only one state, it reduces to a Markov process with additive increments.
Since there is no requirement on the $1$-dimensional marginal distribution $X_t^i$ for $\bm{X}$ to be Markov, starting with a Markov process, a multitude of other Markov processes can be constructed by just modifying the marginal distributions \cite{darsow1992copulas,overbeck2015multivariate}. 

\subsection{Distribution Bound}

A Markov additive process is defined as a bivariate Markov process $\{X_t\}=\{(J_t,S(t))\}$ where $\{J_t\}$ is a Markov process with state space $E$ and the increments of $\{S(t)\}$ are governed by $\{J_t\}$ in the sense that \cite{asmussen2010ruin}
\begin{equation}
{E}\left[f(S({t+s})-S(t))g(J_{t+s})|\mathscr{F}_t\right] = {E}_{J_t,0}\left[f(S(s))g(J_s)\right].
\end{equation}
We focus on the finite state space scenario and the structure is fully understood in discrete-time and continuous time settings.

In discrete time, a Markov additive process is specified by the measure-valued matrix (kernel) $\mathbf{F}(dx)$ whose $ij$th element is the defective probability distribution 
\begin{equation}
F_{ij}(dx)={P}_{i,0}(J_1=j,Y_1\in{dx}),
\end{equation}
where $Y_t=S(t)-S(t-1)$. An alternative description is in terms of the transition matrix $\mathbf{P}=(p_{ij})_{i,j\in{E}}$, $p_{ij}={P}_i(J_1=j)$, and the probability measures
\begin{equation}
H_{ij}(dx) = {P}(Y_1\in{dx}|J_0=i, J_1=j) = \frac{F_{ij}(dx)}{p_{ij}}.
\end{equation} 
With respect to a transition probability $p_{ij}$, the increment of $\{S_t\}$ has a distribution $B_{ij}$.

In continuous time, $\{J_t\}$ is a Markov jump process specified by the intensity matrix $\bm{\Lambda}=(\lambda_{ij})_{i,j\in{E}}$. 
When $\{J_t\}$ jumps from $i$ to $j\neq{i}$, the jump of $\{S_t\}$ has a probability $q_{ij}$ with a distribution $B_{ij}$.
When $J_t\equiv i$, $\{ S_t \}$ evolves like a L\'{e}vy process with a characteristic triplet $(\nu_i,\mu_i,\sigma_i^2)$, where $\nu_i\in{R}$, $\sigma_i\ge{0}$, and $\nu_i$ is a nonnegative measure on $R$, if the L\'{e}vy measure $\nu_i$ satisfies, $\int_{-\epsilon}^{\epsilon} |x| \nu_i (dx) < \infty$ and $\nu_i([-\epsilon,\epsilon]^c) = \int_{\{ x:|x|>\epsilon \}} \nu_i(dx)< \infty$, $\forall\epsilon>0$, the L\'{e}vy exponent is expressed as
\begin{equation}
\kappa^{(i)}(\theta) = \theta\mu_i + \theta^2{\sigma_i^2}/2 + \int_{\infty}^{\infty}\left[e^{\theta{x}}-1\right]\nu_i(dx),
\end{equation}
where $\theta\in\Theta=\{ \theta\in{C}: E{e^{\mathcal{R}(\theta)S_1}}<\infty \}$.

Consider the matrix $\widehat{\textbf{F}}_t[\theta]=(E_i[e^{\theta{S(t)}};J_t=j])_{i,j\in{E}}$. 
In discrete time,
\begin{equation}
\widehat{\textbf{F}}_t[\theta]=\widehat{\textbf{F}}[\theta]^t,
\end{equation}
where $\widehat{\textbf{F}}[\theta]=\widehat{\textbf{F}}_1[\theta]$ is a $E\times{E}$ matrix with $ij$th element $\widehat{F}^{(ij)}[\theta]=p_{ij}\int{e^{\theta{x}}F^{(ij)}}(dx)$, and $\theta\in\Theta=\{ \theta\in{R}:\int{e^{\theta{x}}F^{(ij)}}(dx)<\infty \}$  \cite{asmussen2003applied}. 
In continuous time, 
\begin{equation}
\widehat{\textbf{F}}_t[\theta] = e^{t\bm{K}[\theta]},
\end{equation}
where $\bm{K}[\theta] = \Lambda + \left( \kappa^{(i)}(\theta) \right)_{\text{diag}} + \lambda_{ij}q_{ij}\left(\widehat{B}_{ij}[\theta]  -1\right)$ \cite{asmussen2003applied}.
By Perron-Frobenius theorem, $\widehat{\textbf{F}}[\theta]$ has a positive real eigenvalue with maximal absolute value, $e^{\kappa(\theta)}$, in discrete time; $\bm{K}[\theta]$ has a real eigenvalue with the maximal real part, $\kappa(\theta)$, in continuous time.
The corresponding right and left eigenvectors are respectively $\textbf{h}^{(\theta)}=\left(h_{i}^{(\theta)}\right)_{i\in{E}}$ and $\textbf{v}^{(\theta)}=\left(v_{i}^{(\theta)}\right)_{i\in{E}}$, particularly, $\textbf{v}^{(\theta)}$, $\textbf{v}^{(\theta)}\textbf{h}^{(\theta)}=1$ and $\bm{\pi}\textbf{h}^{(\theta)}=1$, where $\bm{\pi}=\textbf{v}^{(0)}$ is the stationary distribution and $\textbf{h}^{(0)}=\textbf{e}$.

With an exponential change of measure, a likelihood ratio process is formulated \cite{asmussen2003applied},
\begin{equation}
L(t) = \frac{h^{(\theta)}(J_t)}{h^{(\theta)}(J_0)}e^{\theta{S(t)}-t\kappa(\theta)},
\end{equation}
which is a mean-one martingale. This martingale process is useful for performance analysis of the wireless channel capacity. The distribution function results are as follows.

\begin{theorem}\label{theorem_distribution_bound}
For a Markov additive process, conditional on initial state $J_0$, the distribution of the cumulative capacity is expressed as, for some $\theta>0$,
\begin{equation}
1 - \frac{h^{(\theta)}(J_0) e^{t\kappa(\theta)-\theta{x}} }{\min\limits_{j\in{E}}(h^{(\theta)}(J_j))} \le F_{S(t)}(x) \le \frac{h^{(-\theta)}(J_0) e^{t\kappa(-\theta)+\theta{x}} }{\min\limits_{j\in{E}}(h^{(-\theta)}(J_j))},
\end{equation}
and the distribution of the transient capacity is expressed as
\begin{equation}
1 - \frac{h^{(\theta)}(J_0){e^{-y_l}}}{\min\limits_{j\in{E}}(h^{(\theta)}(J_j))} \le P\left\{ \overline{C}(t) \le c^\ast \right\}\le \frac{h^{(-\theta)}(J_0){e^{-y_u}}}{\min\limits_{j\in{E}}(h^{(-\theta)}(J_j))},
\end{equation}
where $c^\ast=\frac{t\kappa(\theta^\ast)+y^\ast}{\theta^\ast{t}}$, with $y^\ast=y_u$ for $\theta^\ast<{0}$ for the upper bound, and $y^\ast=y_l$ for $\theta^\ast>{0}$ for the lower bound.
\end{theorem}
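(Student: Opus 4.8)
The plan is to obtain both pairs of bounds by a Chernoff-type argument anchored on the mean-one martingale $L(t)$, reducing the matrix moment generating function $\widehat{\textbf{F}}_t[\theta]$ to a scalar one via the Perron--Frobenius eigenvector $\textbf{h}^{(\theta)}$. Reading off the $i$th component of the eigenvalue equation $\widehat{\textbf{F}}_t[\theta]\,\textbf{h}^{(\theta)} = e^{t\kappa(\theta)}\textbf{h}^{(\theta)}$ (valid in discrete time since $\widehat{\textbf{F}}_t[\theta]=\widehat{\textbf{F}}[\theta]^t$, and in continuous time since $\widehat{\textbf{F}}_t[\theta]=e^{t\bm{K}[\theta]}$) gives the exact identity $E_i\!\left[ e^{\theta S(t)} h^{(\theta)}(J_t) \right] = h^{(\theta)}(J_0)\,e^{t\kappa(\theta)}$, which is equivalent to $E[L(t)]=1$. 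Since $\textbf{h}^{(\theta)}$ is strictly positive, I would bound $h^{(\theta)}(J_t)\ge\min_{j\in E}h^{(\theta)}(J_j)>0$ under the expectation and factor the minimum out, yielding the scalar estimate $E_{J_0}[e^{\theta S(t)}]\le h^{(\theta)}(J_0)e^{t\kappa(\theta)}/\min_{j\in E}h^{(\theta)}(J_j)$ for each admissible exponent.

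For the lower bound on $F_{S(t)}(x)=1-P(S(t)>x)$ I would take $\theta>0$ and apply Markov's inequality to the upper tail, $P(S(t)>x)=P(e^{\theta S(t)}>e^{\theta x})\le e^{-\theta x}E_{J_0}[e^{\theta S(t)}]$, and then substitute the scalar estimate to get $P(S(t)>x)\le h^{(\theta)}(J_0)e^{t\kappa(\theta)-\theta x}/\min_{j\in E}h^{(\theta)}(J_j)$; taking complements delivers the claimed lower bound. The upper bound on $F_{S(t)}(x)=P(S(t)\le x)$ is the mirror image on the opposite tail with the eigenvector $\textbf{h}^{(-\theta)}$: writing $P(S(t)\le x)=P(e^{-\theta S(t)}\ge e^{-\theta x})\le e^{\theta x}E_{J_0}[e^{-\theta S(t)}]$ and inserting the $-\theta$ version of the scalar estimate produces $h^{(-\theta)}(J_0)e^{t\kappa(-\theta)+\theta x}/\min_{j\in E}h^{(-\theta)}(J_j)$, exactly as stated.

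The transient-capacity bounds follow by the change of variable $P\{\overline{C}(t)\le c^\ast\}=P\{S(t)\le tc^\ast\}=F_{S(t)}(tc^\ast)$. Both distribution bounds carry an exponent of the unified form $t\kappa(\theta^\ast)-\theta^\ast x$, with $\theta^\ast>0$ for the lower bound and $\theta^\ast=-\theta<0$ for the upper bound, so substituting $x=tc^\ast$ and demanding that this exponent collapse to $-y^\ast$ forces $\theta^\ast t c^\ast = t\kappa(\theta^\ast)+y^\ast$, i.e.\ $c^\ast=\bigl(t\kappa(\theta^\ast)+y^\ast\bigr)/(\theta^\ast t)$, with $y^\ast=y_l$ in the lower case and $y^\ast=y_u$ in the upper case. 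This reproduces the stated expression for $c^\ast$ and turns the two bounds into the advertised forms involving $e^{-y_l}$ and $e^{-y_u}$.

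The routine part is the Markov/Chernoff inequality; the step that needs real care is the scalar reduction, where the whole argument rests on the strict positivity of $\textbf{h}^{(\theta)}$ guaranteed by the Perron--Frobenius theorem, so that $\min_{j\in E}h^{(\theta)}(J_j)>0$ and the factorization is legitimate and uniform in the terminal state $J_t$. Secondary points to verify are that the chosen $\theta$ lies in the domain $\Theta$ where $\widehat{\textbf{F}}[\theta]$ (or $\bm{K}[\theta]$) and hence $\kappa(\theta)$ is finite, and that $\theta$ can be selected so the bounds are nontrivial; tracking the sign conventions consistently between the two tails is the only genuine bookkeeping hazard.
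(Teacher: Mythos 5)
Your proposal is correct and follows essentially the same route as the paper: the paper likewise replaces $h^{(\theta)}(J_t)$ in the mean-one martingale $L(t)$ by $\min_{j\in E}h^{(\theta)}(J_j)$ to obtain a sub-mean-one quantity $\underline{L}(t)$ and then applies Markov's inequality with an appropriately chosen threshold, which is exactly your Chernoff bound built on the scalar estimate $E_{J_0}[e^{\theta S(t)}]\le h^{(\theta)}(J_0)e^{t\kappa(\theta)}/\min_{j\in E}h^{(\theta)}(J_j)$, with the two tails handled by the sign of the exponent and the transient-capacity bounds obtained by the same substitution $x=tc^\ast$.
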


The theorem indicates that the distribution of wireless channel capacity with Markov dependence is light-tailed.
Analytical and computational results of transient capacity are shown in Fig. \ref{transient_capacity_markov_additive}.

\begin{figure}[!t]
\centering
\includegraphics[width=3.5in]{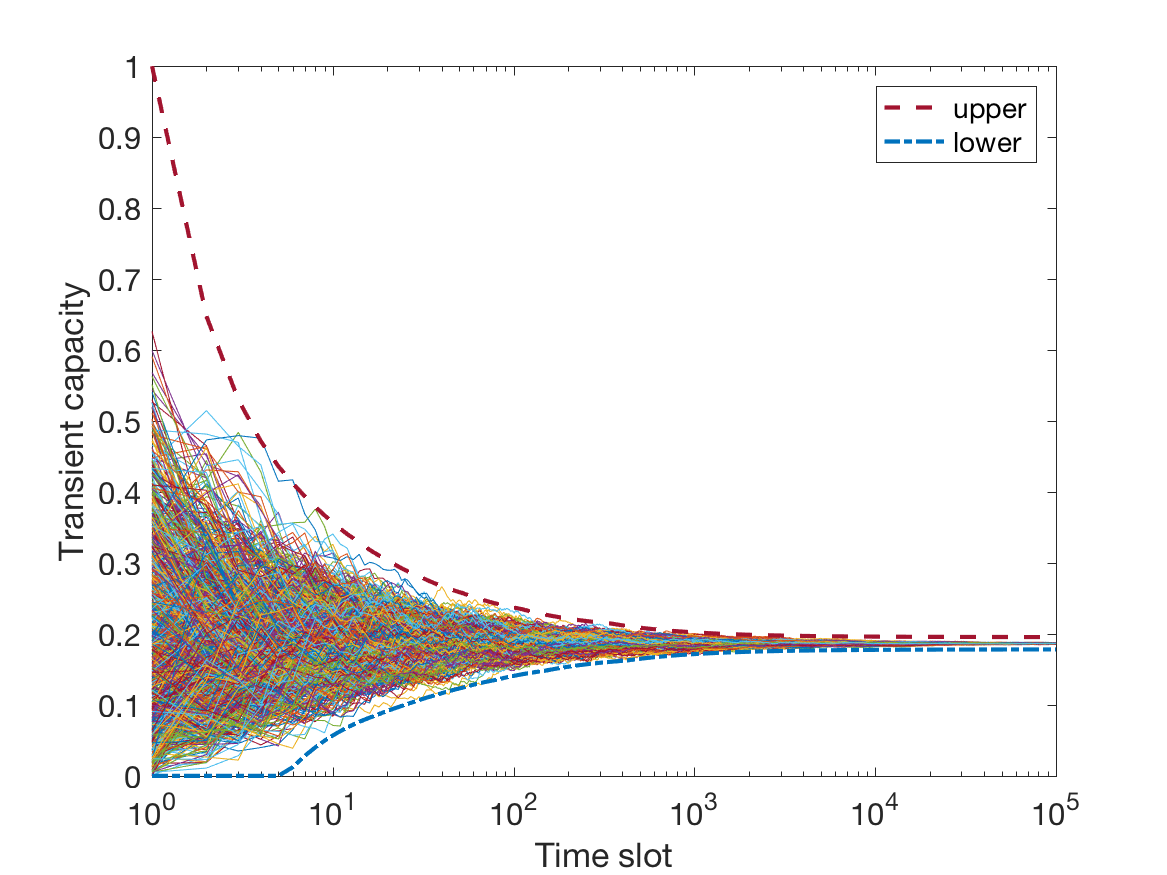}
\caption{Transient capacity of Markov additive Rayleigh channel. According to the strong law of large numbers extended to the Markov additive process, the transient capacity converges to the mean as time goes to infinity, i.e., the convergence of sample paths. The large deviation results are upper bound and lower bound with respect to the mean. Results are normalized, with violation probability $\epsilon=10^{-3}$, $W=20$kHz, $\textbf{SNR}=[e^{0.5}~e^{0.5}; 0.7e^{0.5}~0.7e^{0.5}]$ and $\textbf{P}=[0.3~0.7; 0.1~0.9]$, and 1000 sample paths.}
\label{transient_capacity_markov_additive}
\end{figure}

\section{Dependence Control}\label{control}

\subsection{Performance Measure}

The wireless channel is essentially a queueing system with cumulative service process $S(t)$ and cumulative arrival process $ A(0,t)=\sum\limits_{s=0}^{t}a(s)$, 
where $a(t)$ denotes the traffic input to the channel at time $t$,
and the temporal increment in the system is expressed as
\begin{equation}
X(t) = a(t)-C(t).
\end{equation} 
The queueing principle of the wireless channel is expressed through the backlog in the system, which is a reflected process of the temporal increment $X(t)$ \cite{asmussen2003applied}, i.e.,
\begin{equation}
B(t+1) = \left[ B(t) + X(t)\right]^{+},
\end{equation}
assume $B(0)=0$,
the backlog function is then expressed as
\begin{equation}\label{eq-bl}
B(t) = \sup_{0\le{s}\le{t}}({A}(s,t)-{S}(s,t)). 
\end{equation}
For a lossless system, the output is the difference between the input and backlog, $A^{\ast}(t) = A(t) - B(t)$, i.e., 
\begin{equation}\label{eq-ior}
A^{\ast}(t) = A\otimes S(t),
\end{equation}
where $f\otimes g(t) = \inf_{0\le{s}\le{t}}\{f(s)+g(s,t)\}$ is the bivariate min-plus convolution \cite{baccelli1992synchronization},
and the delay is defined via the input-output relationship \cite{ciucu2014sharp}, i.e., 
\begin{equation}
D(t) = \inf\left\{ d\ge{0}: A(t-d)\le A^\ast(t) \right\},
\end{equation}
which is the virtual delay that a hypothetical arrival has experienced on departure.
The maximum rate of traffic with delay requirement that the system can support without dropping is defined as the delay-constrained capacity or throughput \cite{yuehong2012analysis}:
\begin{equation}
\overline{C}{(d,\epsilon)} = \sup_{P(D(t) > d) \le \epsilon, \forall t} E\left[ \frac{A(t)}{t} \right].
\end{equation}
The above results also apply to continuous-time setting.

To embody the impact of dependence in wireless channels, 
we assume that the input is a constant fluid process, i.e., 
\begin{equation}
A(t) = \lambda{t}.
\end{equation}
The results of performance metrics are summarized in the following theorem and corollary.

\begin{theorem}\label{theorem_performance_bound}
Consider a constant arrival process $A(t)=\lambda{t}$, the delay conditional on the initial state $J_0=i$ is bounded by
\begin{equation}
\frac{h^{(-\theta)}(J_i) e^{-\theta{\lambda{d}}} }{\max\limits_{j\in{E}}h^{(-\theta)}(J_j)} \le P_i(D\ge{d}) \le \frac{h^{(-\theta)}(J_i) e^{-\theta{\lambda{d}}} }{\min\limits_{j\in{E}}h^{(-\theta)}(J_j)},
\end{equation}
where $-\theta$ is the negative root of $\kappa(\theta)=0$ of $S(t)-\lambda{t}$ and $\bm{h}^{(-\theta)}$ is the corresponding right eigenvector,
given the initial state distribution $\bm{\varpi}$, the delay and backlog are bounded by 
\begin{eqnarray}
P(D\ge d) &=& \sum_{i\in{E}}\varpi_{i}P_i(D\ge d), \\
P(B\ge b) &=& P(D\ge b/\lambda). 
\end{eqnarray}
\end{theorem}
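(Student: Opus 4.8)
The plan is to reduce the delay tail to a first-passage (ruin) probability for the net-input process $S(t)-\lambda t$ and then to pin that probability down with the exponential martingale $L(t)$ already introduced. First I would dispose of the delay--backlog conversion, which also yields the last displayed identity. With the constant fluid arrival $A(t)=\lambda t$ the lossless output is $A^\ast(t)=A(t)-B(t)=\lambda t - B(t)$, and since $A(t-d)=\lambda(t-d)$, the defining condition $A(t-d)\le A^\ast(t)$ in $D(t)=\inf\{d\ge 0: A(t-d)\le A^\ast(t)\}$ reduces to $\lambda d\ge B(t)$. Hence $D(t)=B(t)/\lambda$ holds pathwise, giving at once $P(B\ge b)=P(D\ge b/\lambda)$ and letting me work entirely with the backlog.

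Next I would rewrite the backlog as a first passage. Put $R(t)=S(t)-\lambda t$; then the net input over $(s,t]$ is $A(s,t)-S(s,t)=\lambda(t-s)-(S(t)-S(s))=-(R(t)-R(s))$, so from \eqref{eq-bl} the stationary backlog is $B=\sup_{t\ge 0}\bigl(-R(t)\bigr)=-\inf_{t\ge 0}R(t)$. Therefore $\{B\ge b\}=\{\inf_{t\ge 0}R(t)\le -b\}=\{\tau<\infty\}$, where $\tau=\inf\{t\ge 0:R(t)\le -b\}$ is the first downcrossing of level $-b$. Stability requires $E[C]>\lambda$, i.e. $R$ has positive drift $\kappa'(0)>0$; by convexity of the eigenvalue $\kappa(\cdot)$ of $S(t)-\lambda t$ together with $\kappa(0)=0$, this guarantees a unique negative root $-\theta$ of $\kappa(\theta)=0$, which is exactly the adjustment coefficient the theorem invokes, with right eigenvector $\bm{h}^{(-\theta)}$.

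The core step is the change of measure. Evaluating the mean-one martingale of $R$ at $-\theta$ and using $\kappa(-\theta)=0$, the time factor drops out and $L(t)=h^{(-\theta)}(J_t)\,e^{-\theta R(t)}/h^{(-\theta)}(J_0)$. Before $\tau$ one has $R(t)>-b$, so $L(t\wedge\tau)$ is bounded (finite state space and $e^{-\theta R}<e^{\theta b}$) and hence uniformly integrable; moreover $R(t)\to\infty$ on $\{\tau=\infty\}$ forces $L(t)\to 0$ there. Optional stopping then gives $1=E_i[L(\tau);\tau<\infty]=h^{(-\theta)}(J_0)^{-1}\,E_i\!\left[h^{(-\theta)}(J_\tau)\,e^{-\theta R(\tau)};\tau<\infty\right]$. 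When $R$ hits the level exactly, $R(\tau)=-b$, so $e^{-\theta R(\tau)}=e^{\theta b}$, and sandwiching $h^{(-\theta)}(J_\tau)$ between $\min_{j}h^{(-\theta)}(J_j)$ and $\max_{j}h^{(-\theta)}(J_j)$ yields $\frac{h^{(-\theta)}(J_i)e^{-\theta b}}{\max_{j}h^{(-\theta)}(J_j)}\le P_i(\tau<\infty)\le \frac{h^{(-\theta)}(J_i)e^{-\theta b}}{\min_{j}h^{(-\theta)}(J_j)}$ with $J_0=i$. Substituting $b=\lambda d$ and $P_i(\tau<\infty)=P_i(D\ge d)$ produces the conditional bounds, and the law of total probability over $\bm{\varpi}$ gives $P(D\ge d)=\sum_{i\in E}\varpi_i P_i(D\ge d)$.

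I expect the overshoot to be the main obstacle. The upper bound is robust: whenever $R(\tau)\le -b$ one has $e^{-\theta R(\tau)}\ge e^{\theta b}$, so the $\min_{j}$ bound follows with no path-regularity assumption. The lower bound, however, relies on $R(\tau)=-b$ (no strict undershoot below the level); this is exact for continuous, spectrally-negative paths such as the Brownian components of the modulating L\'{e}vy pieces, but for genuinely jumpy increments the undershoot inflates $e^{-\theta R(\tau)}$ and one must either restrict to the continuous-path regime or absorb the maximal jump into the constant. This undershoot control is the only delicate point of the argument.
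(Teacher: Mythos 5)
Your proof follows essentially the same route as the paper's: reduce the delay tail to the first-passage probability of the net-input process $S(t)-\lambda t$, apply the exponential change of measure (optional stopping of the likelihood-ratio martingale) at the negative root $-\theta$ of $\kappa(\theta)=0$, and sandwich $h^{(-\theta)}(J_\tau)$ between its minimum and maximum over the finite state space. The overshoot caveat you raise for the lower bound is genuine, and the paper's own proof silently sidesteps it by writing $e^{-\theta x}$ in place of the exact exponential of the level-crossing value in the change-of-measure identity, so your treatment is if anything the more careful one.
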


\begin{corollary}
For constant fluid traffic $A(t)=\lambda{t}$, the delay-constrained capacity, letting $P(D\ge d)=\epsilon$, is expressed as
\begin{equation}
\frac{-1}{\theta{d}}{\log\frac{\epsilon\cdot{\max\limits_{j\in{E}}h^{(-\theta)}(J_j)}}{\sum_{i\in{E}}{\varpi_i}h^{(-\theta)}(J_i)}} \le \lambda \le \frac{-1}{\theta{d}}{\log\frac{\epsilon\cdot{\min\limits_{j\in{E}}h^{(-\theta)}(J_j)}}{\sum_{i\in{E}}{\varpi_i}h^{(-\theta)}(J_i)}}.
\end{equation}
\end{corollary}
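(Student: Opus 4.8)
The plan is to obtain this corollary as a direct algebraic inversion of the per-state delay bound already established in Theorem \ref{theorem_performance_bound}. The two ingredients I would use are the two-sided estimate on $P_i(D\ge d)$ for each fixed initial state $i$, and the law of total probability $P(D\ge d)=\sum_{i\in E}\varpi_i P_i(D\ge d)$ supplied by the same theorem. No new probabilistic machinery is needed; the entire content is to turn a constraint on the violation probability into a constraint on the admissible rate $\lambda$.

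First I would average the per-state bounds against the initial distribution $\bm{\varpi}$. Since the exponential factor $e^{-\theta\lambda d}$ and the denominators $\max_{j\in E}h^{(-\theta)}(J_j)$ and $\min_{j\in E}h^{(-\theta)}(J_j)$ are independent of the summation index $i$, multiplying each per-state inequality by $\varpi_i$ and summing yields
\begin{equation}
\frac{e^{-\theta\lambda{d}}\sum_{i\in{E}}\varpi_i h^{(-\theta)}(J_i)}{\max\limits_{j\in{E}}h^{(-\theta)}(J_j)} \le P(D\ge{d}) \le \frac{e^{-\theta\lambda{d}}\sum_{i\in{E}}\varpi_i h^{(-\theta)}(J_i)}{\min\limits_{j\in{E}}h^{(-\theta)}(J_j)}.
\end{equation}
Next I would impose the design condition $P(D\ge d)=\epsilon$ and solve each scalar inequality for $\lambda$. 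Both sandwiching expressions are strictly decreasing in $\lambda$, so from the left inequality $\epsilon$ is bounded below by the lower expression: isolating $e^{-\theta\lambda d}$ and taking logarithms gives a bound on $-\theta\lambda d$, whence a lower bound on $\lambda$ after dividing. The right inequality is handled symmetrically, with $\min$ replacing $\max$, producing the matching upper bound on $\lambda$.

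The one point that demands genuine care—and the only place where the argument is more than bookkeeping—is the sign handling in the final division. Because $-\theta$ is the negative root of $\kappa(\theta)=0$, we have $\theta>0$, so $-\theta d<0$, and dividing each logarithmic inequality by this negative quantity reverses the inequality direction. Getting this orientation right is precisely what sends the \emph{upper} probabilistic bound (with $\min$) to the \emph{upper} capacity bound and the lower probabilistic bound (with $\max$) to the lower capacity bound, matching the claimed two-sided expression. Everything else reduces to the monotonicity of $\log$ and elementary rearrangement of the exponential factor, so I expect the proof to be short once the averaging and the sign convention are in place.
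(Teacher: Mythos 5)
Your proposal is correct and follows the same route as the paper, which simply states that the result ``follows directly from Theorem \ref{theorem_performance_bound}'': you average the per-state delay bounds over the initial distribution $\bm{\varpi}$, set $P(D\ge d)=\epsilon$, and invert for $\lambda$, with the correct sign reversal when dividing by $-\theta d<0$. Your write-up merely makes explicit the bookkeeping the paper leaves implicit, so there is nothing further to reconcile.
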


\begin{proof}
Consider the delay-constrained capacity for the constant fluid process $A(t) =\lambda{t}$,
\begin{equation}
\overline{C}{(d,\epsilon)} = \sup_{P(D(t) > d) \le \epsilon, \forall t} \lambda,
\end{equation}
the result follows directly from Theorem \ref{theorem_performance_bound}.
\end{proof}

\begin{remark}
It's a folk law that the regularity of arrival or service processes results in better performance measures, and it's been proved that for some involved system the queue length of a constant fluid input is the shortest for all types of inputs that have the same average traffic rate \cite{muller2002comparison}, thus the minimal delay and maximal delay-constrained capacity.
\end{remark}

\subsection{Dependence Classification}

We classify the dependence into three types, i.e., positive dependence, independence, and negative dependence.
Intuitively, positive dependence implies that large or small values of random variables tend to occur together, while negative dependence implies that large values of one variable tend to occur together with small values of others \cite{denuit2006actuarial}.

We use discrete-time setting in this subsection.
Formally, the cumulative capacity $S$ is said to have a positive dependence structure $S_{+}$ in the sense of increasing convex order, if
\begin{equation}
S_{\perp} \le_{icx} S_{+},
\end{equation}
or a negative dependence structure $S_{-}$ in the sense of increasing convex order, if
\begin{equation}
S_{-} \le_{icx}S_{\perp},
\end{equation}
where $S_{\perp}$ has an independence structure. 
Since the mean of sum of random variables equals the sum of means of individual random variables, i.e.,
\begin{equation}
E[S_{-}]= E[S_{\perp}]= E[S_{+}],
\end{equation}
the increasing convex ordering and convex ordering of cumulative capacity are equivalent \cite{kampke2015income}, i.e.,
\begin{multline}
S_{-} \le_{icx} S_{\perp} \le_{icx}S_{+} \iff
S_{-} \le_{cx} S_{\perp} \le_{cx}S_{+}.
\end{multline}
It's worth noting that the supermodular ordering of instantaneous capacity, i.e.,
\begin{equation}
\textbf{C} \le_{sm} \widetilde{\textbf{C}},
\end{equation}
indicates that the marginal distributions of the instantaneous increments are identical, 
particularly, if $\textbf{C}\le_{sm}\widetilde{\textbf{C}}$, then $\sum_{i=1}^{n}C(i)\le_{cx}\sum_{i=1}^{n}\widetilde{C}(i)$.

As the distribution of wireless channel capacity is light-tailed, the asymptotic behavior of the bounding function is still exponential for weak forms of dependence and becomes heavy-tailed for stronger dependence \cite{asmussen2010ruin}. The ordering of the exponential adjustment coefficient is as follows.

\begin{theorem}
Consider two wireless channel capacity processes, if the cumulative capacities are convex ordered, then the adjustment coefficients for the delay bounds are correspondingly ordered, i.e.,
\begin{equation}
S(t)\le_{cx}\widetilde{S}(t),\ \forall{t}\in\mathbb{N}
\implies
\widetilde{\theta} \le \theta.
\end{equation}
\end{theorem}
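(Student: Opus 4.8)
\emph{Proof plan.} The plan is to route the implication through the L\'{e}vy exponent $\kappa(\theta)$, using the fact that convex order is exactly the statement that all convex test functions are ordered in expectation, together with the observation that $x\mapsto e^{\theta x}$ is convex for every real $\theta$ (both signs). The targets $\theta$ and $\widetilde\theta$ are the strictly negative roots of the exponents of the net-input processes $S(t)-\lambda t$ and $\widetilde S(t)-\lambda t$, so it suffices to compare these two exponents as functions on $\mathbb{R}$ and then compare their negative zeros.

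First I would convert the convex order into an order of transforms. By definition, $S(t)\le_{cx}\widetilde S(t)$ gives $E[\phi(S(t))]\le E[\phi(\widetilde S(t))]$ for every convex $\phi$; taking $\phi(x)=e^{\theta x}$ yields $E[e^{\theta S(t)}]\le E[e^{\theta\widetilde S(t)}]$ for all $\theta\in\mathbb{R}$ and all $t$. Next I would pass from these finite-horizon transforms to the exponent. Since $E_i[e^{\theta S(t)}]=(\widehat{\textbf{F}}[\theta]^t\textbf{e})_i$ and, by the Perron--Frobenius structure invoked above, $\frac{1}{t}\log E_i[e^{\theta S(t)}]\to\kappa_S(\theta)$, the inequality survives the limit to give $\kappa_S(\theta)\le\widetilde\kappa_S(\theta)$ for every $\theta$. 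Because $S(t)-\lambda t$ differs from $S(t)$ only by the deterministic linear term, its exponent is $\kappa(\theta)=\kappa_S(\theta)-\lambda\theta$; subtracting the same $\lambda\theta$ from both sides preserves the order, so $\kappa(\theta)\le\widetilde\kappa(\theta)$ for all $\theta$.

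The final step is a comparison of negative roots. Both exponents are convex and vanish at the origin, and since convex order forces equal means, $E[S(t)]=E[\widetilde S(t)]$, both share the same slope $\kappa'(0)=\widetilde\kappa'(0)=\bar c-\lambda$ at the origin, which is positive under the stability condition $\lambda<\bar c$ implicit in Theorem~\ref{theorem_performance_bound}. Hence each exponent is negative on the open interval between its two zeros, namely $(-\theta,0)$ respectively $(-\widetilde\theta,0)$, and nonnegative outside it. Evaluating the pointwise order at $-\theta$ gives $\widetilde\kappa(-\theta)\ge\kappa(-\theta)=0$; since $\widetilde\kappa$ is nonnegative only at or beyond its own negative root, the point $-\theta$ cannot lie strictly inside $(-\widetilde\theta,0)$, which forces $-\theta\le-\widetilde\theta$, i.e. $\widetilde\theta\le\theta$. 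This is consistent with the role of negative dependence claimed earlier, since the capacity that is smaller in convex order, namely $S$, carries the larger adjustment coefficient and hence the faster-decaying delay bound.

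I expect the main obstacle to be the second step: ensuring that the finite-horizon transform inequality genuinely survives the $\frac{1}{t}\log(\cdot)$ limit in the Markov additive setting. The matrix identity $E_i[e^{\theta S(t)}]=(\widehat{\textbf{F}}[\theta]^t\textbf{e})_i$ carries initial- and terminal-state eigenvector prefactors, and I must argue these are subexponential so that the growth rate is governed by the common exponent $\kappa_S(\theta)$ rather than by the prefactors; the light-tailedness established in Theorem~\ref{theorem_distribution_bound} then guarantees that $\Theta$ contains a neighborhood of the relevant roots, so that $\kappa$ and $\widetilde\kappa$ are both finite and convex there and the limit is well defined.
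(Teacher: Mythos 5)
Your proof is correct and follows essentially the same route as the paper's: reduce to the convex ordering of the cumulative net-input process (immediate here since the arrival $\lambda t$ is deterministic, so convex order is preserved under the shift) and conclude that the adjustment coefficients are ordered. The only difference is that the paper cites this last implication from the literature (Asmussen; M\"uller--Pflug), whereas you prove it directly via the convexity of $x\mapsto e^{\theta x}$, the resulting pointwise ordering $\kappa(\theta)\le\widetilde{\kappa}(\theta)$ of the limiting exponents (the eigenvector prefactors being subexponential, as you note), and the convexity/root-location argument at the negative zero --- all of which is sound.
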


\begin{proof}
Consider the negative increment process, i.e.,
\begin{equation}
-X(t)=C(t)-a(t). 
\end{equation}
If it is light-tailed, then the delay violation probability has an exponential bound with adjustment coefficient $\theta>0$ defined by $\kappa(\theta)=0$, where \cite{asmussen2010ruin,muller2001asymptotic}
\begin{equation}
\kappa(\theta) = \lim_{t\rightarrow\infty}\frac{1}{t}E\left[ e^{\theta\sum_{i=1}^{t}{X(i)}} \right].
\end{equation}
By exploring the ordering of the cumulative increment process, 
\begin{equation}
\sum_{i=1}^{n}-X(i) \le_{cx} \sum_{i=1}^{n}-\widetilde{X}(i),\ \forall{n}\in\mathbb{N},
\end{equation}
the adjustment coefficients are ordered as follows \cite{asmussen2010ruin,muller2001asymptotic}
\begin{equation}
\widetilde{\theta} \le \theta.
\end{equation}
Specifically, for constant arrival, the ordering of the cumulative capacity results in the ordering of the cumulative negative increment process. 
\end{proof}

The ordering of the adjustment coefficients gives an ordering of the asymptotic delay tail distribution, with some restrictions, the result can be applied to the delay-constrained capacity.

\begin{corollary}
For delay bounding functions with the same prefactor or are bounded by a same prefactor before the exponential term, the ordering of the cumulative capacity $S_{-} \le_{cx} S_{\perp} \le_{cx}S_{+}$ indicates the ordering of the delay, i.e.,
\begin{equation}
P\left(D_{-}\ge{x}\right) \le P\left(D_{\perp}\ge{x}\right) \le P\left(D_{+}\ge{x}\right),
\end{equation}
and the ordering of the delay-constrained capacity for the same prefactor, i.e.,
\begin{equation}
\lambda_{-} \ge \lambda_{\perp} \ge \lambda_{+}.
\end{equation}
\end{corollary}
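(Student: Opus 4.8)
The plan is to reduce both orderings to the adjustment-coefficient comparison of the preceding theorem, fed through the exponential delay bound of Theorem~\ref{theorem_performance_bound}. First I would fix a common arrival rate $\lambda$ lying in the stability region of all three capacity processes, so that each admits a finite positive adjustment coefficient. Applying the preceding theorem pairwise to the chain $S_{-} \le_{cx} S_{\perp} \le_{cx} S_{+}$ (using that subtracting the deterministic $\lambda t$ preserves the convex order of $S(t)-\lambda t$) yields the reversed ordering of the coefficients,
\begin{equation}
\theta_{+} \le \theta_{\perp} \le \theta_{-}.
\end{equation}

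For the delay ordering I would substitute this into the bound $P_i(D\ge x)\le h^{(-\theta)}(J_i)e^{-\theta\lambda x}/\min_{j\in E}h^{(-\theta)}(J_j)$ and average over the initial distribution $\bm{\varpi}$. Under the hypothesis that the three cases share a common prefactor before the exponential, the bound reduces to a term of the form $K e^{-\theta\lambda x}$, which is strictly decreasing in $\theta$ for each $x>0$; hence the coefficient ordering transfers, with reversal, to $P(D_{-}\ge x)\le P(D_{\perp}\ge x)\le P(D_{+}\ge x)$, which is the first claim.

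For the delay-constrained capacity I would resist the temptation to plug the coefficient ordering directly into the closed form $\lambda=-\frac{1}{\theta d}\log(\epsilon K)$ of the preceding corollary: this is misleading because $\theta$ is not a free constant but the root of $\kappa(\theta)=0$ for $S(t)-\lambda t$, so it varies with $\lambda$, and the naive substitution produces the wrong direction. Instead I would argue by inversion. For every fixed $\lambda$ the bounded violation probabilities obey $P_{-}(D\ge d)\le P_{\perp}(D\ge d)\le P_{+}(D\ge d)$ by the first part, and each is monotonically increasing in $\lambda$ (as $\lambda$ rises toward the mean capacity the coefficient $\theta(\lambda)$ shrinks to zero and the bound rises to one). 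The delay-constrained capacity is the largest $\lambda$ for which the constraint $P(D\ge d)\le\epsilon$ still holds; since the $S_{-}$ curve sits below the $S_{+}$ curve everywhere and both are increasing, its crossing of the level $\epsilon$ occurs at a larger abscissa, giving $\lambda_{-}\ge\lambda_{\perp}\ge\lambda_{+}$.

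The main obstacle is exactly this $\theta$--$\lambda$ coupling in the second part; the comparison must be made at the level of arrival rate versus violation probability rather than by treating $\theta$ as exogenous. A secondary point needing care is the common-prefactor hypothesis: because the eigenvector entries $h^{(-\theta)}$ themselves move with $\theta$, without holding the prefactor fixed (or bounding all three by one prefactor) the exponential comparison need not be monotone, which is why the statement is restricted to that case.
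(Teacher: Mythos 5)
Your proposal is correct and follows the route the paper intends: the paper states this corollary without proof, relying on the preceding theorem's adjustment-coefficient ordering ($S\le_{cx}\widetilde S\Rightarrow\widetilde\theta\le\theta$, applied pairwise to give $\theta_+\le\theta_\perp\le\theta_-$) together with the remark that ``the ordering of the adjustment coefficients gives an ordering of the asymptotic delay tail distribution,'' exactly as you do via the common-prefactor bound $Ke^{-\theta\lambda x}$. Your handling of the second claim is actually more careful than the paper's: you correctly observe that substituting the coefficient ordering into the closed form $\lambda=\frac{-1}{\theta d}\log(\epsilon K)$ of the earlier corollary would reverse the inequality because $\theta$ is the root of $\kappa(\theta)=0$ for $S(t)-\lambda t$ and hence depends on $\lambda$, and your inversion argument (the pointwise ordering of the bounding curves in $\lambda$ implies the supremum of $\{\lambda: P(D\ge d)\le\epsilon\}$ is ordered the other way, a step that needs only set inclusion, not even the monotonicity in $\lambda$ you invoke) is the clean way to close that gap, which the paper leaves implicit behind the phrase ``with some restrictions.''
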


The impact of negative dependence and positive dependence in capacity on delay and comparison with independence in capacity are illustrated in Fig. \ref{delay_markov_negative_positive_dependence}.
We fix the noise power density and change the transmission power in SNR. The result shows that the wireless channel attains a better performance with less transmission power or smaller capacity for negative dependence in contrast to positive dependence.

\begin{figure}[!t]
\centering
\includegraphics[width=3.5in]{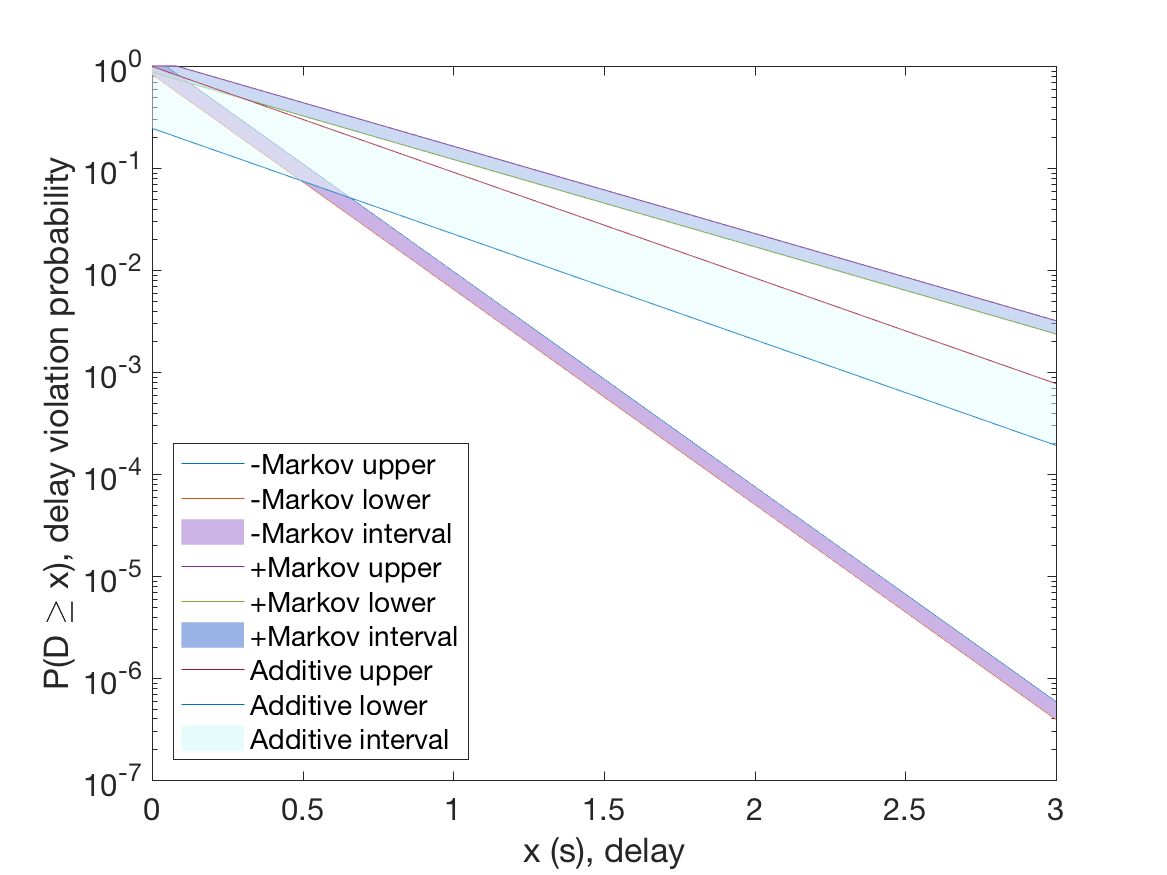}
\caption{Delay tail distribution of Rayleigh channel. ``-'' and ``+'' depict respectively negative and positive dependence in capacity, the lines depict the double-sided bounds with the intervals depicted as the shaded areas. $\lambda=10$kbits, $W=20$kHz, $SNR=e^{0.5}$ for the independence case of additive capacity process, $\textbf{SNR}=[e^{0.5}~e^{0.5}; 0.7e^{0.5}~0.7e^{0.5}]$, and 
$\textbf{P}=[0.4125~0.5875; 0.2518~0.7482]$ calculated from Fr\'{echet} copula with $\alpha=0.5$ for $\lambda-C(t)$ indicating negative dependence in capacity and 
$\textbf{P}=[0.2875~0.7125; 0.3054~0.6946]$ calculated from Fr\'{echet} copula with $\alpha=-0.5$ for $\lambda-C(t)$ indicating positive dependence in capacity, 
for the dependence case of Markov additive capacity process with initial distribution $\bm{\varpi}=[0.5~0.5]$ and stationary distribution $\bm{\pi}=[0.3\ 0.7]$. 
%The induced dependence can not be too strong.
}
\label{delay_markov_negative_positive_dependence}
\end{figure}

\subsection{Dependence under Control}

We distinguish the random parameters in the wireless system, which cause the dependence in the wireless channel capacity, into two categories, i.e., uncontrollable parameters and controllable parameters.
Uncontrollable parameters represent the property of the environment that can not be interfered, e.g., fading, %it's not trivial to control the propagation of the wireless signals; 
while controllable parameters represent the configurable property of the wireless system, e.g., power. %, we can control the variation of power.
We use the controllable parameters to induce negative dependence into the wireless channel capacity to achieve dependence control.

We assume no Granger causality among random parameters.
No-Granger causality is a concept initially introduced in econometrics and refers to a multivariate dynamic system in which each variable is determined by its own lagged values and no further information is provided by the lagged values of other variables \cite{cherubini2011copula}. 

\begin{proposition}
For a $n$-dimensional process $\bm{X}$, $\bm{X}^1,\ldots,\bm{X}^{i-1}$, $\bm{X}^{i+1},\ldots,\bm{X}^n$ do not Granger cause $\bm{X}^i$, if \cite{cherubini2010copula,cherubini2011copula}
\begin{equation}
P\left( X^{i}_{t_{k+1}} \le x | \mathscr{F}_{t_k}^{\bm{X}^1,\ldots,\bm{X}^n} \right) = P\left( X^{i}_{t_{k+1}} \le x | \mathscr{F}_{t_k}^{\bm{X}^i} \right).
\end{equation}
\end{proposition}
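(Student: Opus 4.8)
The plan is to show that the displayed identity is precisely the measure-theoretic formalization of multivariate Granger non-causality, by reducing the statement to the operational meaning of ``no further predictive information'' and then invoking the characterization of a conditional law by its conditional distribution function. This is the natural lift of the bivariate definition in \cite{cherubini2011copula} to $n$ dimensions, obtained by replacing the single ``other'' variable with the entire collection $\{\bm{X}^j:j\neq i\}$.

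First I would recall the operational definition: the variables $\bm{X}^1,\ldots,\bm{X}^{i-1},\bm{X}^{i+1},\ldots,\bm{X}^n$ do not Granger cause $\bm{X}^i$ when, for predicting the one-step-ahead value $X^i_{t_{k+1}}$, the past of the whole system carries the same predictive content as the own past of $\bm{X}^i$ alone. Since the own filtration is nested inside the joint filtration, $\mathscr{F}_{t_k}^{\bm{X}^i}\subseteq\mathscr{F}_{t_k}^{\bm{X}^1,\ldots,\bm{X}^n}$, enlarging the conditioning $\sigma$-algebra can only refine the prediction; non-causality is exactly the statement that this refinement is null. Next I would argue that ``same predictive content'' must be read as equality of the full conditional laws rather than of selected moments, and that the conditional law of $X^i_{t_{k+1}}$ given a $\sigma$-algebra $\mathscr{G}$ is completely determined by the family of conditional distribution functions $x\mapsto P(X^i_{t_{k+1}}\le x\mid\mathscr{G})$. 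Hence the two predictive laws coincide if and only if these conditional distribution functions agree for every threshold $x$, which is precisely the displayed equation.

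The remaining verification is measure-theoretic bookkeeping. I would check that the conditional distribution function admits a regular version, so that the pointwise-in-$x$ statement is well posed almost surely; that the equality is to be understood as an almost-sure identity of $\mathscr{F}_{t_k}^{\bm{X}^1,\ldots,\bm{X}^n}$-measurable random variables whose right-hand side is in fact $\mathscr{F}_{t_k}^{\bm{X}^i}$-measurable; and that the tower property over the nested filtrations is consistent with this, since conditioning the joint-filtration version down onto the own filtration must return the same object. These steps are routine and mirror the bivariate argument.

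The genuine obstacle is not this equivalence but connecting the condition to the copula machinery of the earlier sections so that it is usable for the dependence-control construction. One must translate the filtration-level equality into a factorization of the transition copula, showing that the copula linking $X^i_{t_{k+1}}$ to the joint past depends on that past only through the $\bm{X}^i$-marginal history. The bivariate case already shows this passage from conditioning $\sigma$-algebras to copulas is delicate; in the multivariate setting the main work is ensuring that the $\star$-product representation of the Markov copula is compatible with the marginalization forced by the no-Granger-causality condition, which is the substantive content beyond merely restating the definition.
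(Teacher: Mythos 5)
Your reading is correct and matches the paper's own treatment: this ``Proposition'' is in fact the \emph{definition} of multivariate no-Granger causality imported verbatim from \cite{cherubini2010copula,cherubini2011copula} (note the one-directional ``if'' and the citations in the statement itself), and the paper supplies no proof for it. Your unpacking --- nested filtrations $\mathscr{F}_{t_k}^{\bm{X}^i}\subseteq\mathscr{F}_{t_k}^{\bm{X}^1,\ldots,\bm{X}^n}$, equality of conditional laws characterized by conditional distribution functions, regular versions and almost-sure measurability --- is a sound gloss on why the displayed identity formalizes ``no further predictive information,'' but there is nothing substantive to prove here. The ``genuine obstacle'' you correctly identify, namely translating the filtration-level equality into a factorization of the transition copula via the $\star$-product, is not part of this statement at all: it is precisely the content of Theorem \ref{theorem_no_granger_causality_biset} (and Theorem \ref{theorem_no_granger_causality}), which the paper proves separately by differentiating and re-integrating the copula $C_{j,j+1}$; you should not fold that work into a proof of this proposition.
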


No-Granger causality and Markov property of each process with respect to its natural filtration together imply the Markov structure of the system as a whole \cite{cherubini2010copula,cherubini2011copula}.
Additional restriction is required for the converse to hold, the $2$-dimensional result is available in \cite{cherubini2011copula}, and the following theorem is an extension to $n$-dimensional case.

\begin{theorem}\label{theorem_no_granger_causality_biset}
For a $n$-dimensional Markov process $\bm{X}$ consisting two dimension sets $\overline{\bm{X}}$ and $\underline{\bm{X}}$, $\bm{X}= \overline{\bm{X}} \cup \underline{\bm{X}}$, $\overline{\bm{X}}$ does not Granger cause $\underline{\bm{X}}$, if and only if
\begin{multline}
C_{j,j+1} \left(\bm{u}_{\underline{\bm{X}}_j}, \bm{u}_{\overline{\bm{X}}_j}, \bm{u}_{\underline{\bm{X}}_{j+1}}, \bm{1}_{\bm{u}_{\overline{\bm{X}}_{j+1}}} \right)  \\
= C_{\overline{\bm{X}}_j\underline{\bm{X}}_j} \stackrel{C_{\underline{\bm{X}}_j} \qty(\bm{u}_{\underline{\bm{X}}_j})}{\star} C_{\underline{\bm{X}}_j\underline{\bm{X}}_{j+1}} \left(\bm{u}_{\overline{\bm{X}}_j}, \bm{u}_{\underline{\bm{X}}_{j+1}} \right),
\end{multline}
$\underline{\bm{X}}$ does not Granger cause $\overline{\bm{X}}$, if and only if
\begin{multline}
C_{j,j+1} \left(\bm{u}_{\underline{\bm{X}}_j}, \bm{u}_{\overline{\bm{X}}_j}, \bm{1}_{\bm{u}_{\underline{\bm{X}}_{j+1}}}, {\bm{u}_{\overline{\bm{X}}_{j+1}}} \right)  \\
= C_{\underline{\bm{X}}_j\overline{\bm{X}}_j} \stackrel{C_{\overline{\bm{X}}_j} \qty(\bm{u}_{\overline{\bm{X}}_j})}{\star} C_{\overline{\bm{X}}_j\overline{\bm{X}}_{j+1}} \left(\bm{u}_{\underline{\bm{X}}_j}, \bm{u}_{\overline{\bm{X}}_{j+1}} \right).
\end{multline}
\end{theorem}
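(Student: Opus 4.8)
The plan is to pass from the measure-theoretic no-Granger causality condition to a one-step conditional independence statement by invoking the Markov property of $\bm{X}$, and then to recognize that this conditional independence is exactly what the multivariate $\star$-operator of Sec.~\ref{model} encodes. I would prove only the first equivalence (``$\overline{\bm{X}}$ does not Granger cause $\underline{\bm{X}}$''); the second is obtained verbatim by interchanging the symbols $\overline{\bm{X}}$ and $\underline{\bm{X}}$.

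First I would read off the definition in the preceding Proposition at the level of the block $\underline{\bm{X}}$, namely that $\overline{\bm{X}}$ does not Granger cause $\underline{\bm{X}}$ means
\begin{equation}
P\!\left( \underline{\bm{X}}_{t_{k+1}} \le \bm{x} \mid \mathscr{F}_{t_k}^{\bm{X}} \right) = P\!\left( \underline{\bm{X}}_{t_{k+1}} \le \bm{x} \mid \mathscr{F}_{t_k}^{\underline{\bm{X}}} \right).
\end{equation}
Since $\bm{X}$ is Markov, the left side reduces to conditioning on the single slice $(\underline{\bm{X}}_{t_k},\overline{\bm{X}}_{t_k})$; comparing with the right side, which conditions on the past of $\underline{\bm{X}}$ only, the equality is then equivalent to the one-step conditional independence
\begin{equation}
\overline{\bm{X}}_{t_k} \perp \underline{\bm{X}}_{t_{k+1}} \mid \underline{\bm{X}}_{t_k}.
\end{equation}
This is the probabilistic statement I then transcribe into copulas.

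Next I would identify both sides of the asserted identity. Setting the $\overline{\bm{X}}_{j+1}$ arguments to $\bm{1}$ marginalizes that block out of $C_{j,j+1}$, so the left side is precisely the copula of the triple $(\underline{\bm{X}}_j,\overline{\bm{X}}_j,\underline{\bm{X}}_{j+1})$. By Sklar's theorem the conditional independence above is equivalent to this triple-copula factoring through its shared block $\underline{\bm{X}}_j$, and the $\star$-operator was defined exactly to realize such a factorization: $A \stackrel{C(\cdot)}{\star} B$ disintegrates along the common block and renders the two outer blocks conditionally independent given it. Taking $A=C_{\overline{X}_j\underline{X}_j}$, $B=C_{\underline{X}_j\underline{X}_{j+1}}$ and gluing copula $C_{j}(\underline{\bm{X}})$ reproduces the right side, so the claimed identity is a faithful copula restatement of the conditional independence. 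Both implications then follow by reading the $\star$-integral in the two directions: differentiating the copula identity with respect to its arguments recovers the conditional laws and hence the Granger condition, while integrating the conditionally independent law against $C_{j}(\underline{\bm{X}})(d\bm{r})$ recovers the identity.

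The main obstacle, and the precise point where this strengthens the bivariate result of~\cite{cherubini2011copula}, is that the glued block $\underline{\bm{X}}_j$ is itself multidimensional: the conditioning measure $C_{j}(\underline{\bm{X}})(d\bm{r})$ and the mixed partials $A_{,C}$, $B_{C,}$ are now Radon--Nikodym derivatives with respect to a multivariate copula rather than a scalar one. I would have to confirm that these derivatives exist $C_{j}(\underline{\bm{X}})$-almost everywhere---this is the standing ``$A_{,C}$ and $B_{C,}$ are well-defined'' hypothesis from the operator's definition---and that a Fubini argument legitimizes the disintegration of the joint law of $(\overline{\bm{X}}_j,\underline{\bm{X}}_j,\underline{\bm{X}}_{j+1})$ along $\underline{\bm{X}}_j$. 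Granting these regularity points, the equivalence ``conditional independence $\iff$ $\star$-factorization'' transfers from the one-dimensional middle-block case of~\cite{cherubini2011copula} without further structural changes.
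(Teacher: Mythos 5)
Your proposal is correct and follows essentially the same route as the paper: the paper likewise takes the one-step form $P(\underline{\bm{X}}_{j+1}\le\bm{x}\,|\,\bm{X}_j)=P(\underline{\bm{X}}_{j+1}\le\bm{x}\,|\,\underline{\bm{X}}_j)$, expresses both conditional laws as partial derivatives of $C_{j,j+1}$ (with the $\overline{\bm{X}}_{j+1}$ arguments set to $\bm{1}$ to marginalize that block), equates them, and integrates to recognize the $\star$-product. Your framing via the conditional independence $\overline{\bm{X}}_j \perp \underline{\bm{X}}_{j+1} \mid \underline{\bm{X}}_j$ and your explicit attention to the almost-everywhere existence of the derivatives $A_{,C}$, $B_{C,}$ are just a more carefully worded version of the same computation.
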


\begin{remark}
Specifically, for the wireless channel capacity that is modeled by a multivariate Markov process, let $\overline{\bm{X}}$ and $\underline{\bm{X}}$ represent respectively the uncontrollable and controllable parameters.
The no-Granger causality guarantees that if the uncontrollable and controllable parameters form a multivariate Markov process, the processes of the uncontrollable and controllable parameters are also Markov processes, which is necessary in dependence control because we need to model the uncontrollable parameters with a certain process and to configure the controllable parameters in a certain way based on a certain process.
\end{remark}

A stronger restriction is that all the $1$-dimensional Markov processes do not Granger cause each other, and the results are as follows.

\begin{theorem}\label{theorem_no_granger_causality}
For a $n$-dimensional Markov process $\bm{X}$ with temporal copula $C_{j,j+1}$ and spatial copula $C_j$,
\begin{IEEEeqnarray}{rCl}
P\left( X^{i}_{t_{k+1}} \le x | {\bm{X}^1_{t_k},\ldots,\bm{X}^n_{t_k}} \right) = P\left( X^{i}_{t_{k+1}} \le x | {\bm{X}^i_{t_k}} \right), \IEEEeqnarraynumspace
\end{IEEEeqnarray}
if and only if
\begin{multline}
C_{j,j+1} \left( x_j^{1},\ldots,x_j^n,1,\ldots,x_{j+1}^i,\ldots,1 \right) = \\
C_j^{,i} \star C_{j,j+1}^i \left( x_j^1,\ldots,x_j^{i-1},x_j^{i+1},\ldots,x_j^n,x_j^i,x_{j+1}^i \right),
\end{multline}
where $C_j^{,i}$ is the reordered spatial copula, and $C_{j,j+1}^i$ is the temporal copula of the $1$-dimensional Markov process $\bm{X}^i$.
\end{theorem}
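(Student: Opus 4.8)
The plan is to obtain this statement as the single-coordinate specialization of Theorem \ref{theorem_no_granger_causality_biset}, and to verify the specialization by the same copula-differentiation argument that underlies the bivariate result of Cherubini et al.\ \cite{cherubini2011copula}. Concretely, I would set $\underline{\bm{X}}=\bm{X}^i$ and $\overline{\bm{X}}=(\bm{X}^1,\ldots,\bm{X}^{i-1},\bm{X}^{i+1},\ldots,\bm{X}^n)$, so that the stated condition $P(X^i_{t_{k+1}}\le x\mid \bm{X}^1_{t_k},\ldots,\bm{X}^n_{t_k})=P(X^i_{t_{k+1}}\le x\mid\bm{X}^i_{t_k})$ is exactly the assertion that $\overline{\bm{X}}$ does not Granger cause $\underline{\bm{X}}$. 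The first task is then to set up the dictionary between conditional laws and copulas: by Sklar's theorem and the differentiation rules of Darsow et al.\ \cite{darsow1992copulas}, the conditional distribution of $X^i_{t_{j+1}}$ given the full time-$j$ vector is the partial derivative of the joint copula $C_{j,j+1}$ with respect to all $n$ of its time-$j$ arguments, evaluated with every future coordinate except the $i$th set to $1$; the conditional distribution given only $\bm{X}^i_{t_j}$ is the partial derivative of the temporal copula $C^i_{j,j+1}$ with respect to its first argument.

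With this dictionary in place, both implications reduce to manipulating these derivatives. For the forward direction, the no-Granger hypothesis says the two conditional laws coincide almost surely; expressing the left-hand conditional law as the spatial conditional copula of the time-$j$ vector composed with the temporal conditional copula of dimension $i$, and integrating back over the time-$j$ coordinates, reproduces exactly the $\star$-integral, which is the Chapman--Kolmogorov marginalization over $X^i_{t_j}$. This yields the identity $C_{j,j+1}(x_j^1,\ldots,x_j^n,1,\ldots,x_{j+1}^i,\ldots,1)=C_j^{,i}\star C^i_{j,j+1}(\cdot)$, where the reordered spatial copula $C_j^{,i}$ places the $i$th coordinate last so that the glue coordinate of the operator $\star$ (the last argument of the first factor, the first argument of the second, integrated up to $x_j^i$) is precisely $X^i_{t_j}$. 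For the reverse direction, I would differentiate the $\star$-identity with respect to the time-$j$ arguments to recover the factorization of the conditional law, which is the no-Granger statement; the passage between the conditional and the integrated forms is legitimate because a copula is determined by its values and the densities $A_{,C}$, $B_{C,}$ are well defined, as noted after the definition of the operator.

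I expect the main obstacle to be bookkeeping rather than analytic: correctly permuting the $n$ spatial coordinates into the reordered copula $C_j^{,i}$ so that the gluing coordinate sits in the position required by the univariate operator $\star$, and confirming that the multivariate operator $\stackrel{C_j(\underline{\bm{X}})}{\star}$ of Theorem \ref{theorem_no_granger_causality_biset} collapses to this univariate $\star$ once $\underline{\bm{X}}$ is one-dimensional, since the central copula $C_j(\underline{\bm{X}})$ is then the trivial one-dimensional copula. A secondary care point is the measurability and the interchange of integration needed to move between the almost-sure conditional identity and the everywhere-valid copula identity; this is handled exactly as in the bivariate case, with the multivariate spatial copula $C_j^{,i}$ replacing the single conditioning marginal.
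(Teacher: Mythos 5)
Your proposal is correct and takes essentially the same route as the paper: the paper's own proof is the one-line remark that the result ``follows analogically from Theorem \ref{theorem_no_granger_causality_biset}'', and your specialization $\underline{\bm{X}}=\bm{X}^i$, $\overline{\bm{X}}=(\bm{X}^1,\ldots,\bm{X}^{i-1},\bm{X}^{i+1},\ldots,\bm{X}^n)$ together with the copula-differentiation and integration argument is exactly the content of the appendix proof of that theorem, instantiated to a one-dimensional $\underline{\bm{X}}$.
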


\begin{proof}
The proof follows analogically from Theorem \ref{theorem_no_granger_causality_biset}.
\end{proof}

\begin{example}
For a $2$-dimensional Markov process $\bm{X}$, $\bm{X}^2$ does not Granger cause $\bm{X}^1$, if and only if \cite{cherubini2011copula}
\begin{equation}
C_{j,j+1}(u_1,v_1,u_2,1) = C_{X^2_j,X^1_j} \star C_{X^1_j,X^1_{j+1}}(v_1,u_1,u_2),
\end{equation}
and $\bm{X}^1$ does not Granger cause $\bm{X}^2$, if and only if \cite{cherubini2011copula}
\begin{equation}
C_{j,j+1}(u_1,v_1,1,v_2) = C_{X^1_j,X^2_j} \star C_{X^2_j,X^2_{j+1}}(u_1,v_1,u_2).
\end{equation}
In the special case, if the spatial dependence is expressed by the product copula, then 
\begin{eqnarray}
C_{j,j+1}(u_1,v_1,u_2,1) &=& v_1 C_{X^1_{j}X^1_{j+1}} \left( u_1,u_2 \right),\\
C_{j,j+1}(u_1,v_1,1,v_2) &=& u_1 C_{X^2_{j}X^2_{j+1}} \left( v_1,v_2 \right).
\end{eqnarray}
\end{example}

Since the copula requires continuity by definition, interpolation is needed to construct a copula from the transition matrix of a Markov process \cite{darsow1992copulas}, while it's not needed to calculate the transition matrix from a copula.
The approach to calculate the transition probability of a Markov chain given the copula of the two consecutive levels is summarized in the following theorem.

\begin{theorem}
For a $1$-dimensional Markov process with finite state space $E$ and initial distribution $\bm\varpi$, given the copula between successive levels $C_{j,j+1}$, 
%the transition matrix $\bm{P}_j$ is the root of the linear equations
\begin{equation}
\sum_{s_j\le{\bm x}}\bm{\varpi}_{j}({s_j})\bm{P}_j(s_j,s_{j+1}\le\bm{y}) = C_{j,j+1}\left( F_j(\bm{x}), F_{j+1}(\bm{y}) \right),
\end{equation}
where $\bm{x}$ and $\bm{y}$ are the ordered state space vector, the state distribution at $j$ is $\bm{\varpi}_j=\bm{\varpi}\prod_{0\le{k}\le{j}}\bm{P}_k$, and $F_j(s_j) = \sum\bm{\varpi}_j(s_k\le{s_j})$ and $F_{j+1}=\bm{\varpi}_j{\bm{P}_j}$.
Together with the unity property of transition matrix $\sum_{j\in{E}}p_{ij}=1$, $\forall{i}\in{E}$, the transition probabilities $\bm{P}_j$ are obtained.
\end{theorem}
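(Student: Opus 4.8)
The plan is to read the displayed equation as Sklar's theorem for the consecutive pair $(X_j,X_{j+1})$ and then to invert it by finite differencing over the ordered state space, with the unity constraint serving only to certify that the extracted matrix is stochastic. First I would expand the left-hand side: by the law of total probability together with the Markov property,
\[
\sum_{s_j\le\bm{x}}\bm{\varpi}_j(s_j)\,\bm{P}_j(s_j,s_{j+1}\le\bm{y})
= P\!\left(X_j\le\bm{x},\,X_{j+1}\le\bm{y}\right)
=: H_{j,j+1}(\bm{x},\bm{y}),
\]
the joint distribution function of the two levels. Sklar's theorem factors this joint law through its copula and margins as $H_{j,j+1}(\bm{x},\bm{y})=C_{j,j+1}(F_j(\bm{x}),F_{j+1}(\bm{y}))$, which is precisely the claimed identity; here $F_j$ is the distribution function of the level-$j$ marginal $\bm{\varpi}_j$ (propagated from $\bm{\varpi}_0=\bm{\varpi}$ via $\bm{\varpi}_{k+1}=\bm{\varpi}_k\bm{P}_k$) and $F_{j+1}$ is the margin pushed forward by $\bm{P}_j$.

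For the inversion I would order the states $e_1<\cdots<e_n$ and difference the identity across the resulting grid. Because the sum in the first argument telescopes, a single first difference in $\bm{x}$ isolates one source state,
\[
H_{j,j+1}(e_k,\bm{y})-H_{j,j+1}(e_{k-1},\bm{y})
=\bm{\varpi}_j(e_k)\,P\!\left(X_{j+1}\le\bm{y}\mid X_j=e_k\right);
\]
dividing by the known mass $\bm{\varpi}_j(e_k)=F_j(e_k)-F_j(e_{k-1})$ gives the conditional law of the next level, and a second difference in $\bm{y}$ recovers each entry $p_j(e_k,e_l)$ as the rectangular increment of $C_{j,j+1}$ over the cell with corners $\big(F_j(e_{k-1}),F_j(e_k)\big)\times\big(F_{j+1}(e_{l-1}),F_{j+1}(e_l)\big)$, divided by $\bm{\varpi}_j(e_k)$. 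This is the discrete analogue of the conditional copula derivative $\partial_u\partial_v C_{j,j+1}$, and no interpolation is required, matching the remark preceding the statement.

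It then remains to certify that $\bm{P}_j$ is a genuine transition matrix and that the recursion closes. Summing $p_j(e_k,\cdot)$ over all targets pushes the second argument to the top state, whereupon the uniform-margin property $C_{j,j+1}(u,1)=u$ collapses the numerator to $F_j(e_k)-F_j(e_{k-1})$, so every row sums to one; this is exactly the unity property invoked in the statement. A dual telescoping in $k$, using $C_{j,j+1}(1,v)=v$, yields $\bm{\varpi}_j\bm{P}_j=\bm{\varpi}_{j+1}$, so the margin is propagated consistently and I can iterate from $\bm{\varpi}_0=\bm{\varpi}$ to produce the entire sequence $\{\bm{P}_j\}$.

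The main obstacle is the apparent circularity signalled by $F_{j+1}=\bm{\varpi}_j\bm{P}_j$: evaluating the right-hand side of the identity seems to require the very margin that $\bm{P}_j$ generates. I would dispose of this by observing that the marginal law at each epoch is pinned down exogenously by the capacity model (the fading/SNR distribution), and in the stationary regime simply equals $\bm{\pi}$ for all $j$; with both margins in hand the difference formula is an explicit, non-circular computation, and the identity $\bm{\varpi}_j\bm{P}_j=\bm{\varpi}_{j+1}$ established above shows that this input is compatible rather than imposing an additional fixed-point equation to be solved.
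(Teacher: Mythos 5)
Your proposal is correct, and its core is the same identity the paper uses, reached from the opposite direction: the paper quotes the Darsow--Nguyen--Olsen conditional-expectation formula $E(I_{Y<y}\mid X)=C_{1,}(F_X(X),F_Y(y))$ and integrates it against the law of $X_j$ to arrive at the displayed equation, whereas you start from the law of total probability (writing the left-hand side as the joint distribution function of $(X_j,X_{j+1})$) and invoke Sklar's theorem; these are the differentiated and integrated forms of the same fact, so the derivations are equivalent. Where you genuinely go beyond the paper is in everything after that: the paper ends with ``the result directly follows,'' while you make the inversion explicit by finite differencing over the ordered state grid, identify each $p_j(e_k,e_l)$ as a rectangular copula increment divided by $\bm{\varpi}_j(e_k)$, show that the uniform-margin property $C(u,1)=u$ makes the row sums equal to one (which explains why, in the paper's two-state example, the boundary equations reduce to the unity constraint rather than giving new information), and verify $\bm{\varpi}_j\bm{P}_j=\bm{\varpi}_{j+1}$ via $C(1,v)=v$. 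Your flagging of the apparent circularity in $F_{j+1}=\bm{\varpi}_j\bm{P}_j$ is also a real observation the paper glosses over, and your resolution (the target margin is supplied exogenously, e.g.\ the stationary distribution) is exactly what the paper does in practice in its worked example. Two cosmetic caveats: the one-step identity needs only the definition of the transition kernel, not the Markov property; and for discrete marginals the copula of $(X_j,X_{j+1})$ is only determined on the range of the margins, so the identity should be read, as you implicitly do, as the defining constraint imposed by the \emph{given} copula on the grid points rather than as a uniqueness statement.
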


\begin{proof}
For random variables $X$ and $Y$ with the copula $C$ \cite{darsow1992copulas}
\begin{IEEEeqnarray}{rCl}
E\left( I_{Y<y}|X \right)(\omega) &=& C_{1,}\left( F_X(X(\omega)), F_Y(y) \right)\ a.s.,
\end{IEEEeqnarray}
it follows that 
\begin{equation}
C\left( F_s(\bm{x}), F_t(\bm{y}) \right) = \int_{-\infty}^x P\left( X_t\le{y} | X_s = \xi \right) d\xi.
\end{equation}
The result directly follows.
\end{proof}

\begin{example}
For a $2$-state homogeneous Markov process, the equations are expressed as
\begin{IEEEeqnarray}{rrCl}
  \IEEEyesnumber\IEEEyessubnumber*
  & C\left(F(0),F(0)\right) & = & \pi_0 p_{00},
  \\*
  & C\left(F(1),F(0)\right) & = & \pi_0 p_{00} + \pi_1 p_{10},
  \\*[-0.625\normalbaselineskip]
  \smash{\left\{
      \IEEEstrut[11\jot]
    \right. } \nonumber
\\*[-0.625\normalbaselineskip]
  & C\left(F(1),F(1)\right) & = & \pi_0 \left(p_{00} + p_{01} \right) + \pi_1 \left(p_{10} + p_{11} \right), \IEEEeqnarraynumspace
  \\*
  & C\left(F(0),F(1)\right) & = & \pi_0 \left(p_{00} + p_{01} \right).
\end{IEEEeqnarray}
Given a stationary distribution $[\pi_{0}\ \pi_{1}]$, $F(0)=\pi_0$ and $F(1)=\pi_0+\pi_1$, we obtain the values of $p_{00}$ and $p_{10}$ from the equations, and we further obtain $p_{01}=1-p_{00}$ and $p_{11}=1-p_{10}$ from the unity property.
\end{example}

\begin{algorithm}[!t]
 \caption{Algorithm for Dependence Control}
 \begin{algorithmic}[1]
 \label{dependence_control_algorithm}
 \renewcommand{\algorithmicrequire}{\textbf{Model:}}
 \renewcommand{\algorithmicensure}{\textbf{Result:}}
 \REQUIRE A $n$-dimensional Markov process consisting of $n$ $1$-dimensional Markov processes without Granger causality
 \ENSURE  Transition matrix of the controllable parameter
 \\ 
  \STATE Initialisation: $C_{j,j+1}$, $C_j$, and $\bm{\varpi}$
  \FOR {$j = 0$ to $t-1$}
  \FOR {$1\le i\le{n}$ of interest} 
  \STATE Calculate $\bm{P}_j^i$ and $\bm{\varpi}_{j+1}^i=\bm{\varpi}^i_j{\bm{P}^i_j}$, with\\ $\sum \bm{\varpi}_j^i{\bm{P}_j^i} = C_{j,j+1}^i$
  \ENDFOR
  \ENDFOR
 \RETURN $\bm{P}$ 
 \end{algorithmic} 
\end{algorithm}

The algorithm of dependence control is shown in Algorithm \ref{dependence_control_algorithm}.
It's worth noting that the Markov property is a pure property of copula,
different copula functions provide a way to character the negative or positive dependence, based on which we can calculate the transition matrix of the controllable parameters in the wireless system, e.g., power, and bring their impacts into capacity.
The Markov family copula is elaborated in Appendix \ref{markov_family_copula}. 

An example is illustrated in Fig. \ref{scatter_hist_capacity}. 
The fading process is independent and the power changes with negative or positive dependence, the result shows that the times series of the instantaneous capacity exhibits weakly negative dependence or weakly positive dependence, and the impact is manifested in the transient capacity, on the other hand, it indicates that the impact of independent parameters is strong.

\begin{figure}[!t]
\centering
\includegraphics[width=3.5in]{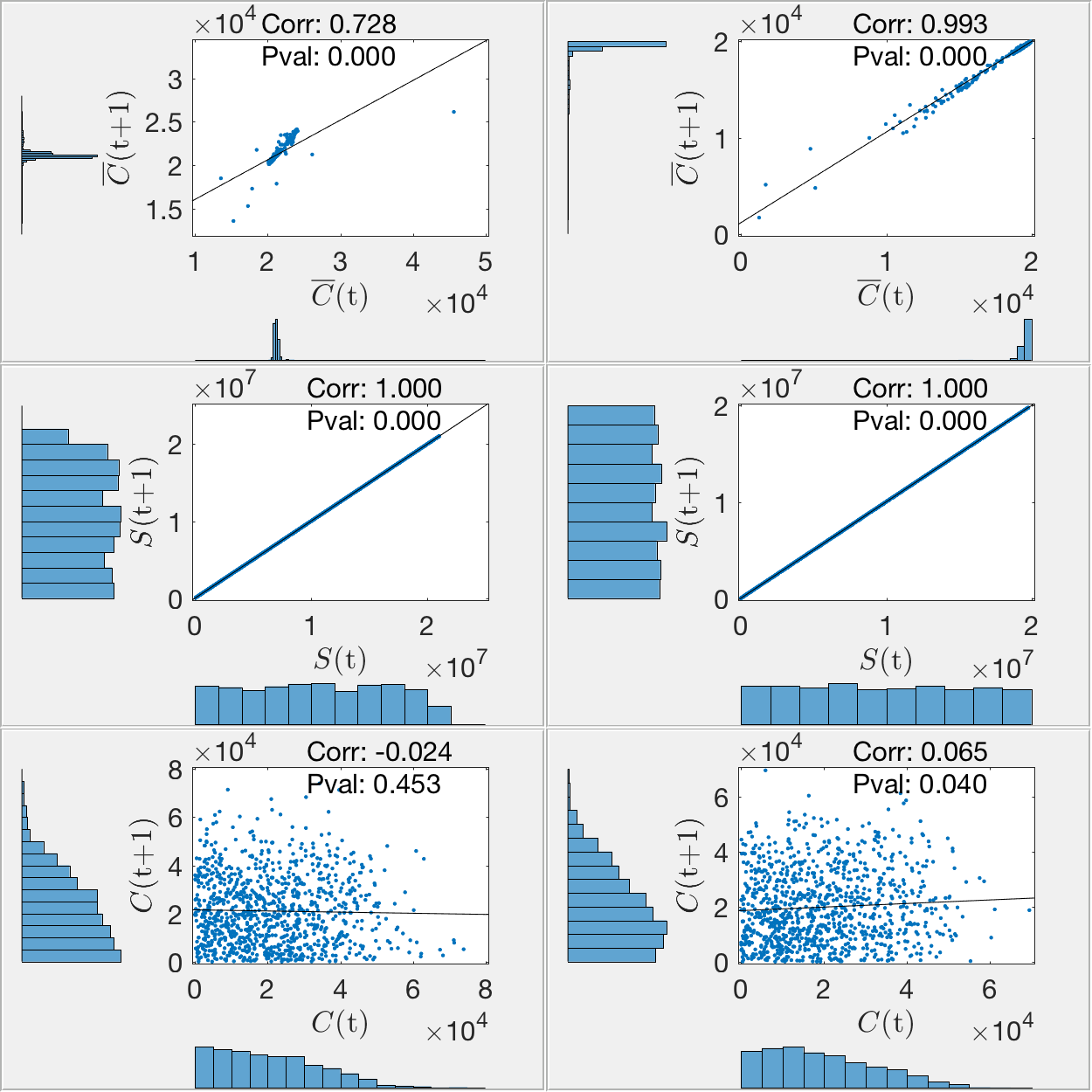}
\caption{Wireless channel capacity of Markov additive Rayleigh channel. 
The uncontrollable parameter is fading with one state and the controllable parameter is power with two states. 
The Markov process is time homogeneous without Granger causality.
The dependence structure is Gaussian copula with correlation matrix $\bm{\Sigma}=[1\ 0.1\ -0.5\ 0; 0.1\ 1\ 0\ 0; -0.5\ 0\ 1\ 0.1; 0\ 0\ 0.1\ 1]$ as negative dependence (left column), $\bm{\Sigma}=[1\ 0.1\ 0.5\ 0; 0.1\ 1\ 0\ 0; 0.5\ 0\ 1\ 0.1; 0\ 0\ 0.1\ 1]$ as positive dependence (right column), initial distribution $\bm{\varpi}= [0.5\ 0.5]$, stationary distribution $\bm{\pi}= [0.3\ 0.7]$. $W=20$kHz and $\textbf{SNR}=[e^{0.5}~e^{0.5}; 0.7e^{0.5}~0.7e^{0.5}]$. $1000$ time slots.
Correlation coefficient and probability value between the time series and lag-1 series are provided.
%The left column depicts negative dependence case and the right column depicts positive case.
}
\label{scatter_hist_capacity}
\end{figure}

\section{Conclusion}\label{conclusion}

We discovered a hidden resource in wireless channels, namely stochastic dependence. Specifically, when the wireless channel capacity evolves with negative dependence, a better channel performance is attained with a smaller capacity.
The contributions of this paper are as follows.
\begin{enumerate}
\item
We modeled the wireless channel capacity as a Markov additive process, used copula to represent the dependence in the underlying Markov process, and derived double-sided distribution function bounds of the cumulative capacity and transient capacity.
Both continuous-time and discrete-time setting were considered.
We treated the wireless channel as a queueing system and derived tail bounds of delay and backlog for a constant fluid arrival process, which is usually regarded as the best performance measure that the wireless channel attains. 
In addition, we obtained a double-sided bound of delay-constrained capacity as the inverse function of delay.
\item
We classified the dependence in wireless channel capacity into three types by  defining stochastic orders on the capacity, namely positive dependence, independence, and negative dependence.
In terms of the performance measure, we proved that the wireless channel attains a better performance when the capacity bears negative dependence relative to positive dependence.
Numerical results showed that a better performance is attained even with a smaller average capacity.
\item
We distinguished the random parameters in the wireless system into two categories, i.e., uncontrollable parameters and controllable parameters.
We assumed that these two types of parameters do not Granger cause each other, for which we provided a sufficient and necessary condition based on the copula of the Markov process.
We proposed to use the controllable parameters to induce negative dependence into the capacity, specifically, we constructed a sequence of temporal copulas of the Markov process based on a priori information of the random parameter processes, from which we calculated the transition matrix of the controllable parameters, given the expected dependence information.
Thus, we achieved dependence control.
\end{enumerate}
It's worth noting that though this paper focuses on dependence with Markov property, the influence of positive and negative dependence holds in general.

%%%%%%%%%%%%%%%%%%%%%%%%%%%%%%%%%%%%%%%%%%%%%%%%%%%%%%%%%

\appendices

\section{Copula}\label{copula_introduction}

Consider a joint distribution $F\left( X_1,\ldots,X_n \right)$ with marginal distribution $F_i(x)$, $i=1,\ldots,n$. Denote $u_i = F_i\left(X_i\right)$, which is uniformly distributed in the unit interval, then \cite{embrechts2009copulas}
\begin{IEEEeqnarray}{rCl}
F\left( X_1,\ldots,X_n \right) &=& F\left( F_1^{-1}\left(u_1\right),\ldots, F_n^{-1}\left(u_n\right) \right) \\
&\equiv& C\left(u_1,\ldots,u_n\right),
\end{IEEEeqnarray}
where $C$ is a copula with standard uniform marginals, specifically, if the marginals are continuous, the copula is unique.

\begin{definition}
A $n$-dimensional copula $C$ is a distribution function on $[0, 1]^n$ with standard uniform marginal distributions, if
\begin{enumerate}
\item
$C(u_1,\ldots,u_n)$ is increasing in each component $u_i$;
\item
$C(1,\ldots,1,u_i,1,\ldots,1)=u_i$ for all $i\in\{1,\ldots,n\}$, $u_i\in[0,1]$;
\item
For all $(a_1,\ldots,a_n), (b_1,\ldots,b_n) \in[0,1]^n$ with $a_i\le{b_i}$, 
\begin{equation}
\sum_{i_1=1}^{2}\ldots\sum_{i_n=1}^{2}(-1)^{i_1+\ldots+i_n}C(u_{1,i_1},\ldots,u_{n,i_n})\ge{0}, \nonumber
\end{equation}
where $u_{j,1}=a_j$ and $u_{j,2}=b_j$ for all $j\in\{1,\ldots,n\}$.
\end{enumerate}
\end{definition}

\begin{example}
The extremely positive dependence, independence, and extremely negative dependence are expressed by copulas.
For $2$-dimensional copula, the extremely positive copula, product copula (independence), and extremely negative copula are defined as
\begin{eqnarray}
M(x,y) &=& \min(x,y), \\
P(x,y) &=& xy, \\
W(x,y) &=& \max(x+y-1,0).
\end{eqnarray}
For a $n$-dimensional copula function $C$, the extremely positive copula functional and product copula functional are defined as
\begin{eqnarray}
M(\bm{x},\bm{y}) &=& C(\min(\bm{x},\bm{y})), \\
P(\bm{x},\bm{y}) &=& C(\bm{x})C(\bm{y}),
\end{eqnarray}
where the minimum is coordinate-wise.
Since the extreme negative copula function is not a copula for higher dimensional case,
let $\bm{X}$ be a $n$-dimensional uniform random vector with copula $C$, the extremely negative copula functional is defined as \cite{overbeck2015multivariate}
\begin{equation}
W(\bm{x},\bm{y}) = P(\bm{X}<\bm{x},T(\bm{X})<\bm{y}) = \int_{\bm{0}}^{\bm{x}} \bm{1}_{T(\bm{z})<\bm{y}} C(d\bm{z}),
\end{equation}
where $T:[0,1]^n\rightarrow[0,1]^n$ is a bijective mapping with the following property
\begin{eqnarray}
P(T(\bm{X})<\bm{x}) &=& C(\bm{x}), \\
T^2(\bm{x}) &=& \bm{x}, \\
C(\{\bm{x} : T(\bm{x})\neq \bm{x} \}) &>& 0.
\end{eqnarray}
In case the copula $C$ is symmetric, i.e., $C(\bm{x})=\overline{C}(1-\bm{x})$, the mapping is expressed as $T(\bm{x})=1-\bm{x}$.
\end{example}

The Sklar's theorem depicts that every distribution can be written as a copula function taking marginals as arguments, and every copula function taking arbitrary marginals as arguments is a joint distribution.
In addition, the functional invariance property implicates that the dependence structure represented by a copula is invariant under non-decreasing and continuous transformations of the marginals.

\subsection{Markov Family Copula}\label{markov_family_copula}

To construct a Markov process, the copula must satisfy the Markov property condition \cite{darsow1992copulas,overbeck2015multivariate}.
\begin{definition}
A family of $2n$-dimensional copula $C_{st}$, $s<t$, is called a Markov family, if \cite{overbeck2015multivariate}
\begin{eqnarray}
C_{su}(\bm{1},\cdot) &=& C_{ut}(\cdot,\bm{1}), \\
C_{st} &=& C_{su} \stackrel{C_{u}}{\star} C_{ut},
\end{eqnarray}
for all $s<u<t$.
\end{definition}
Examples of Markov family copula are Gaussian copula and Fr\'{e}chet copula \cite{overbeck2015multivariate}.

\begin{example}
The $n$-dimensional Gausssian copula is expressed as 
\begin{eqnarray}
C_{\bm{\Sigma}}(\bm{u}) = \Phi_{\bm{\Sigma}}\left( \Phi^{-1}(u_1),\ldots,\Phi^{-1}(u_n) \right),
\end{eqnarray}
where $\Phi_{\bm{\Sigma}}$ denotes the joint distribution of the $n$-dimensional standard normal distribution with linear correlation matrix $\bm{\Sigma}$, and $\Phi^{-1}$ denotes the inverse of the distribution function of the $1$-dimensional standard normal distribution.
\end{example}

\begin{example}
A convex combination of $M$, $P$, and $W$ is a Markov family copula, i.e.,
\begin{equation}
C_{st} = \alpha(s,t)W + (1-\alpha(s,t)-\beta(s,t))P + \beta(s,t)M,
\end{equation}
if and only if \cite{darsow1992copulas,overbeck2015multivariate}, for $s<u<t$,
\begin{eqnarray}
\alpha(s,t) &=& \beta(s,u)\alpha(u,t) + \alpha(s,u)\beta(u,t), \\
\beta(s,t) &=& \alpha(s,u)\alpha(u,t) + \beta(s,u)\beta(u,t),
\end{eqnarray}
where $\alpha(s,t)\ge{0}$, $\beta(s,t)\ge{0}$, and $\alpha(s,t)+\beta(s,t)\le{1}$. 
For homogeneous case, $\alpha(s,t)=\alpha(t-s)$ and $\beta(s,t)=\beta(t-s)$, a solution is as follows
\begin{eqnarray}
\alpha(h) &=& \frac{e^{-2h}(1 - e^{-h})}{2}, \\
\beta(h) &=& \frac{e^{-2h}(1 + e^{-h})}{2}.
\end{eqnarray}
Let $\alpha = e^{-h}$, it's a one-parameter copula \cite{darsow1992copulas}
\begin{equation}
C_\alpha = \frac{\alpha^2(1-\alpha)}{2}W + (1-\alpha^2)P + \frac{\alpha^2(1+\alpha)}{2}M,
\end{equation}
where $-1\le\alpha\le{1}$, if $|\alpha|$ is small, independence is indicated, if $\alpha$ is near $1$, strongly positive dependence is indicated, and if $\alpha$ is near $-1$, strongly negative dependence is indicated.
\end{example}

It's elaborated that Fr\'{e}chet copulas imply quite a restricted type of Markov process and Archimedean copulas are incompatible with Markov chains \cite{lageraas2010copulas}.

\section{Proof of Theorem \ref{theorem_distribution_bound}}

\begin{proof}
In order to provide exponential upper bound for the distribution of the cumulative capacity, define \cite{gallager2013stochastic}
\begin{equation}
\underline{L}(t) = \frac{\min_{j\in{E}}(h^{(\theta)}(J_j))}{h^{(\theta)}(J_0)}e^{\theta{S(t)}-t\kappa(\theta)},
\end{equation}
where $\underline{L}(t)\le L(t)$, i.e., ${E}[\underline{L}(t)]\le{1}$. 
Apply Markov inequality to $\underline{L}(t)$ and get, for any $\mu>0$, 
\begin{eqnarray}
P\{ \underline{L}(t)\ge{\mu} \} \le \frac{1}{\mu}{E}[\underline{L}(t)]\le \frac{1}{\mu}.
\end{eqnarray}
Choose $\mu=e^{-t\kappa(\theta)+\theta{x}}\frac{\min_{j\in{E}}(h^{(\theta)}(J_j))}{h^{(\theta)}(J_0)}$, for $\theta\le{0}$, 
\begin{eqnarray}
P\{ S(t)\le x \} \le \frac{h^{(\theta)}(J_0)}{\min_{j\in{E}}(h^{(\theta)}(J_j))}e^{t\kappa(\theta)-\theta{x}},
\end{eqnarray}
while for $\theta>0$,
\begin{eqnarray}
P\{ S(t)\ge x \} \le \frac{h^{(\theta)}(J_0)}{\min_{j\in{E}}(h^{(\theta)}(J_j))}e^{t\kappa(\theta)-\theta{x}},
\end{eqnarray}
which indicates that the distribution has a light tail.
Letting $-y^\ast = t\kappa(\theta)-\theta{x} \le{0}$, 
the distribution of the transient capacity is bounded by
\begin{equation}
1 - \frac{h^{(\theta)}(J_0){e^{-y_l}}}{\min\limits_{j\in{E}}(h^{(\theta)}(J_j))} \le P\left\{ \overline{C}(t) \le c^\ast \right\}\le \frac{h^{(\theta)}(J_0){e^{-y_u}}}{\min\limits_{j\in{E}}(h^{(\theta)}(J_j))},
\end{equation}
where $c^\ast=\frac{t\kappa(\theta)+y^\ast}{\theta{t}}$, with $y^\ast=y_u$ for $\theta<{0}$ for the upper bound, and $y^\ast=y_l$ for $\theta>{0}$ for the lower bound.
\end{proof}

\section{Proof of Theorem \ref{theorem_performance_bound}}

\begin{proof}
For the constant fluid arrival $A(t)=\lambda{t}$,
the delay is expressed as $P(D(t) > {x}) = P\{ \sup_{0\le{s}\le{t}}\{ \lambda(t-s) - S(s,t) \} > \lambda{x} \}$, and the relationship with $P(B(t) > {x})$ directly follows, i.e., $P(B(t) > {x}) = P\left(D(t) > {\frac{x}{\lambda}}\right)$. 

Consider the Markov additive process $S(t)-\lambda{t}$ and the likelihood ratio martingale
\begin{equation}
L(t) = \frac{h^{(\theta)}(J_t)}{h^{(\theta)}(J_0)}e^{\theta(S(t)-\lambda{t})-t\kappa(\theta)},
\end{equation}
where $\kappa(\theta)$ and $\bm{h}^{(\theta)}$ are the eigenvalue and eigenvector corresponding to $S(t)-\lambda{t}$,
with a change of measure, the delay is expressed as \cite{zhu2008ruin,asmussen2010ruin}
\begin{equation}
P_i(D\ge{x}) = {h^{(-\theta)}(J_i)} E^{(-\theta)}_i \left[ \frac{1}{h^{(-\theta)}(J_\tau)}e^{-\theta{x}} \right],
\end{equation}
where $-\theta$ is the negative root of $\kappa(\theta)=0$ of $S(t)-\lambda{t}$, $\bm{h}^{(-\theta)}$ is the corresponding right eigenvector, and $\tau$ is the stopping time that $\tau=\inf\{t\ge{0}: \lambda{t}-S(t)\ge{x}\}$.
The results follow directly.
\end{proof}

\section{Proof of Theorem \ref{theorem_no_granger_causality_biset}}

\begin{proof}
Since
\begin{IEEEeqnarray}{rCl}
\IEEEeqnarraymulticol{3}{l}{
\mathbb{P} \left( \underline{\bm{X}}_{j+1} \le \bm{x} | {\bm{X}}_j \right)  
} \nonumber\\
=  {C_{j,j+1}}_{C_{\underline{\bm{X}}_j\overline{\bm{X}}_j}, } \left( F_{\underline{\bm{X}}_j}(\underline{\bm{X}}_j), F_{\overline{\bm{X}}_j}(\overline{\bm{X}}_j), F_{\underline{\bm{X}}_{j+1}}(\bm{x}), \bm{1}_{F_{\overline{\bm{X}}_{j+1}}} \right), \IEEEeqnarraynumspace\nonumber
\end{IEEEeqnarray}
\begin{IEEEeqnarray}{rCl}
\IEEEeqnarraymulticol{3}{l}{
\mathbb{P} \left( \underline{\bm{X}}_{j+1} \le \bm{x} | {\underline{\bm{X}}}_j \right) 
} \\
=  {C_{j,j+1}}_{ C_{\underline{\bm{X}}_j}, } \left(F_{\underline{\bm{X}}_j}(\underline{\bm{X}}_j), \bm{1}_{F_{\overline{\bm{X}}_j}}, F_{\underline{\bm{X}}_{j+1}}(\bm{x}), \bm{1}_{F_{\overline{\bm{X}}_{j+1}}} \right), \IEEEeqnarraynumspace
\end{IEEEeqnarray}
the no-Granger causality holds, if and only if
\begin{multline}
{C_{j,j+1}}_{C_{\underline{\bm{X}}_j\overline{\bm{X}}_j}, } \left(\bm{u}_{\underline{\bm{X}}_j}, \bm{u}_{\overline{\bm{X}}_j}, \bm{u}_{\underline{\bm{X}}_{j+1}}, \bm{1}_{\bm{u}_{\overline{\bm{X}}_{j+1}}} \right) \\
=
{C_{j,j+1}}_{ C_{\underline{\bm{X}}_j}, } \left(\bm{u}_{\underline{\bm{X}}_j}, \bm{1}_{\bm{u}_{\overline{\bm{X}}_j}}, \bm{u}_{\underline{\bm{X}}_{j+1}}, \bm{1}_{\bm{u}_{\overline{\bm{X}}_{j+1}}} \right). %\nonumber
\end{multline}
By integrating, we obtain
\begin{IEEEeqnarray}{rCl}
\IEEEeqnarraymulticol{3}{l}{
C_{j,j+1} \left(\bm{u}_{\underline{\bm{X}}_j}, \bm{u}_{\overline{\bm{X}}_j}, \bm{u}_{\underline{\bm{X}}_{j+1}}, \bm{1}_{\bm{u}_{\overline{\bm{X}}_{j+1}}} \right)
}\IEEEeqnarraynumspace\\
&=& \int_{\bm{0}}^{\bm{u}_{\underline{\bm{X}}_j}} {C_{\overline{\bm{X}}_j\underline{\bm{X}}_j}}_{, C_{\underline{\bm{X}}_j}} \left( {\bm{u}_{\overline{\bm{X}}_j}},  {\bm{u}_{\underline{\bm{X}}}} \right) \nonumber\\
&&
\cdot  {C_{\underline{\bm{X}}_j\underline{\bm{X}}_{j+1}}}_{ C_{\underline{\bm{X}}_j}, } \left(  {\bm{u}_{\underline{\bm{X}}}}, {\bm{u}_{\underline{\bm{X}}_{j+1}}} \right) C_{\underline{\bm{X}}_j}\qty(d{{\bm{u}_{\underline{\bm{X}}}}})  \IEEEeqnarraynumspace\\
&=& C_{\overline{\bm{X}}_j\underline{\bm{X}}_j} \stackrel{C_{\underline{\bm{X}}_j} (\bm{u}_{\underline{\bm{X}}_j})}{\star} C_{\underline{\bm{X}}_j\underline{\bm{X}}_{j+1}} \left(\bm{u}_{\overline{\bm{X}}_j}, \bm{u}_{\underline{\bm{X}}_{j+1}} \right). %\nonumber
\end{IEEEeqnarray}
The other result follows analogically.
\end{proof}

%%%%%%%%%%%%%%%%%%%%%%%%%%%%%%%%%%%%%%%%%%%%%%%%%%%%%%%%%
\bibliographystyle{IEEEtran}%{abbrv}%{alpha}%{unsrt}
\bibliography{main}

\end{document}